\definecolor{mygreen}{rgb}{0,.5,0}
\newtheorem{theorem}{Theorem}[section]
\newtheorem{lemma}{Lemma}[section]
\newtheorem{definition}{Definition}[section]
\newtheorem{example}{Example}[section]
\DeclareMathOperator*{\argmin}{arg\,min}
\newcommand{\VaR}{{\text{VaR}}}
\newcommand{\ES}{{\text{ES}}}
\newcommand{\MS}{{\text{MS}}}
\newcommand{\R}{{\mathbb{R}}}
\newcommand{\LInf}{\mathcal{L}^{\infty}(\Omega, \mathcal{F}, P)}
\newcommand{\PSet}{\mathcal{P}}
\newcommand{\X}{\mathcal{X}}
\newcommand{\be}{\begin{equation}}
\newcommand{\ee}{\end{equation}}
\newcommand{\bee}{\begin{equation*}}
\newcommand{\eee}{\end{equation*}}
\begin{document}

\title{On the Measurement of Economic Tail Risk\thanks{We are grateful to the seminar and conference participants at Central University of Finance and Economics, National University of Singapore, Peking University, University of Waterloo, University Paris VII, The International Congress on Industrial and Applied Mathematics 2015, 
Econometric Society China Meeting 2014, Econometric Society Asian Meeting 2014, Econometric Society European Meeting 2014,
INFORMS Annual Meeting 2013, INFORMS Annual Meeting 2014, Quantitative Methods in Finance Conference 2013, RiskMinds Asia Conference 2013, Risk and Regulation Workshop 2014, the 41st European Group of Risk and Insurance Economists Seminar, and the 8th World Congress of the Bachelier Finance Society for their helpful comments and discussion. This research is supported by the University Grants Committee of HKSAR of China, and the Department of Mathematics of HKUST. This research was partially completed while Xianhua Peng was visiting the Institute for Mathematical Sciences and the Center for Quantitative Finance, National University of Singapore, in 2013.}}

\author{Steven Kou\thanks{Risk Management Institute and Department of Mathematics,
National University of Singapore,
21 Heng Mui Keng Terrace,
I3 Building 04-03,
Singapore 119613. Email: matsteve@nus.edu.sg.} \and Xianhua Peng\thanks{Corresponding author.  Department of Mathematics, Hong Kong University of Science and Technology, Clear Water Bay, Kowloon, Hong Kong. Email: maxhpeng@ust.hk. Tel: (852)23587431. Fax: (852)23581643.}}
\date{}

\maketitle

\begin{abstract}
This paper attempts to provide a decision-theoretic foundation for the measurement of economic tail risk, which is not only closely related to utility theory but also relevant to statistical model uncertainty. The main result is that the only risk measures that satisfy a set of economic axioms for the Choquet expected utility and the statistical property of elicitability (i.e. there exists an objective function such that minimizing the expected objective function yields the risk measure) are the mean functional and the median shortfall, which is the median of tail loss distribution. Elicitability is important for backtesting. We also extend the result to address model uncertainty by incorporating multiple scenarios. As an application, we argue that median shortfall is a better alternative than expected shortfall for setting capital requirements in Basel Accords.\\



\emph{Keywords}: comonotonic independence, model uncertainty, robustness, elicitability, backtest, Value-at-Risk

\emph{JEL classification}: C10, C44, C53, D81, G17, G18, G28, K23

\end{abstract}

\baselineskip 18pt


\section{Introduction}\label{sec:intro}

 This paper attempts to provide a decision-theoretic foundation for the measurement of economic tail risk. Two important applications are setting insurance premiums and capital requirements for financial institutions.
For example, a widely used class of risk measures for setting insurance risk premiums is proposed by
\citet*{Wang97} based on a set of axioms.
In terms of capital requirements, \citet{Gordy-2003} provides a theoretical foundation for the Basel Accord banking book risk measure, by demonstrating
that under certain conditions the risk measure is asymptotically equivalent to the 99.9\% Value-at-Risk (VaR). VaR is a widely used approach for the measurement of tail risk; see, e.g., \citet*{Duffie-Pan-1997,Duffie-Pan-2001} and \citet*{Jorion07}.

In this paper we focus on two aspects of risk measurement. First,
risk measurement is closely related to utility theories of risk preferences.
The papers that are most relevant to the present paper are
\citet{Schmeidler-1986, Schmeidler-1989},
which extend the expected utility theory by relaxing the independence axiom to the comonotonic independence axiom;
this class of risk preference can successfully explain various violations of the expectated utility theory, such as the Ellsberg paradox.
Second, a major difficulty in measuring tail risk is that
the tail part of a loss distribution is difficult to estimate and hence bears substantial model uncertainty. As emphasized by \citet[][]{Hansen-2013}, ``uncertainty can come from limited data, unknown models and misspecification of those models."

In face of statistical uncertainty, different procedures may be used to forecast the risk measure. It is hence desirable to be able to evaluate which procedure gives a better forecast. The elicitability of a risk measure is a property based on a decision-theoretic framework for evaluating the performance of different forecasting procedures (\citet{Gneiting-2011}). The elicitability of a risk measure means that the risk measure can be obtained by minimizing the expectation of a forecasting objective function (i.e., a scoring rule, see \citet{Jose-Winkler-2011}); then, the forecasting objective function can be used for evaluating different forecasting procedures.

Elicitability is closely related to backtesting, whose objective is  to evaluate the performance of a risk forecasting model. If a risk measure is elicitable, then the sample average forecasting error based on the objective function can be used for backtesting the risk measure.
\citet{Gneiting-2011} shows that VaR is elicitable but expected shortfall is not, which ``may challenge the use of the expected shortfall as a predictive measure of risk, and may provide a partial explanation for the lack of literature on the evaluation of expected shortfall
forecasts, as opposed to quantile or VaR forecasts."
\citet*{Linton-2011} propose a backtest for evaluating VaR estimates that delivers more power in finite samples than existing methods
and develop a mechanism to find out why and when a model is misspecified; see also
\citet[][Ch. 6]{Jorion07}.
 \citet*{Linton-Xiao-2013} point out that VaR has an advantage over expected shortfall as the asymptotic inference procedures for VaR ``has the same asymptotic behavior regardless of the thickness of the tails."

The elicitability of a risk measure is also related to the concept of ``consistency" of a risk measure proposed by \citet{Davis-2013}, who shows that VaR exhibits some inherent superiority over other risk measures.

The main result of the paper is that the only risk measures that satisfy both a set of economic axioms proposed by \citet{Schmeidler-1989} and the statistical requirement of elicitability (\citet{Gneiting-2011}) are the mean functional and the median shortfall, which is the median of the tail loss distribution and is also the VaR at a higher confidence level.

A risk measure is said to be robust if
(i) it can accommodate model misspecification (possibly by incorporating multiple scenarios and models)
and (ii) it has statistical robustness, which means that a small deviation in the model or small changes in the data only results in a small change in the risk measurement.
The first part of the meaning of robustness is related to ambiguity and model uncertainty in decision theory.
To address these issues, multiple priors or multiple models may be used; see \citet*{Gilboa-Schmeidler-1989}, \citet*{Maccheroni-06}, and \citet*{Hansen-Sargent-2001, HS-2007}, among others.
We also incorporate multiple models in this paper; see Section \ref{sec:mult_scen}. We add to the this literature by studying the link between risk measures and statistical uncertainty via elicitability. As for the second part of the meaning of robustness,
\citet*{CDS-2010} show that expected shortfall leads to a less robust risk measurement procedure than historical VaR; \citet*{KPH-2006, KPH-2013}
propose a set of axioms for robust external risk measures, which include VaR.

There has been a growing literature on capital requirements for banking regulation and robust risk measurement. \citet{GK-2013} investigate the design of risk weights to align regulatory and private objectives in a mean-variance framework for portfolio selection. \citet*{GX-2014} develop a framework for quantifying the impact of model error and for measuring and minimizing risk in a way that is robust to model error. \citet*{Keppo20102192} show that the Basel II market risk requirements may have the unintended consequence of postponing banks' recapitalization and hence increasing banks' default probability. We add to this literature by applying our theoretical results to the study on which risk measure may be more suitable for setting capital requirements in Basel Accords; see Section \ref{sec:app}.

Important contribution to measurement of risk based on economic axioms includes \citet{Aumann-Serrano-2008}, \citet{Foster-Hart-2009, Foster-Hart-2013}, and \citet{Hart-2011}, which
study risk measurement of gambles (i.e., random variables with positive mean and taking negative values with positive probability). This paper complements their results by linking economic axioms for risk measurement with statistical model uncertainty; in addition, our approach
focuses on the measurement of tail risk for general random variables. Thus, the risk measure considered in this paper has a different objective.

The remainder of the paper is organized as follows. Section \ref{sec:main_results} presents the main result of the paper. In Section \ref{sec:mult_scen}, we propose to use a scenario aggregation function to combine risk measurements under multiple models. In Section \ref{sec:app}, we apply the results in previous sections to the study of Basel Accord capital requirements. Section \ref{sec:comm} is devoted to relevant comments. 

%
%

%

\section{Main Results}\label{sec:main_results}

\subsection{Axioms and Representation}



Let $(\Omega, \mathcal{F}, P)$ be a probability space that describes the states and the probability of occurrence of states at a future time $T$.
Assume the probability space is large enough so that one can define a random variable uniformly distributed on [0,1].
Let a random variable $X$ defined on the probability space denote the random loss of a portfolio of financial assets that will be realized at time $T$. Then $-X$ is the random profit of the portfolio. Let $\mathcal{X}$ be a set of random variables that include all bounded random variables, i.e., $\mathcal{X}\supset \LInf$, where $\LInf:=\{X\mid \text{there exists}\ M<\infty\ \text{such that}\ |X|\leq M, \text{a.s. P}\}$. A risk measure $\rho$ is a functional defined on $\mathcal{X}$ that maps a random variable $X$ to a real number $\rho(X)$. The specification of $\X$ depends on $\rho$; in particular, $\X$ can include unbounded random variables. For example,
if $\rho$ is variance, then $\X$ can be specified as $\mathcal{L}^2(\Omega, \mathcal{F}, P)$; if $\rho$ is VaR, then $\X$ can be specified as the set of all random variables.

An important relation between two random variables is comonotonicity (\citet{Schmeidler-1986}):
Two random variables $X$ and $Y$ are said to be comonotonic, if $(X(\omega_1)-X(\omega_2))(Y(\omega_1)-Y(\omega_2))\geq 0$, $\forall \omega_1, \omega_2\in\Omega$.
Let $X$ and $Y$ be the loss of two portfolios, respectively. Suppose that there is a representative agent in the economy and he or she prefers the profit $-X$ to the profit $-Y$. If the agent is risk averse, then
his or her preference may imply that $-X$ is less risky than $-Y$. Motivated by this,
we propose the following set of axioms, which are based on the axioms for the Choquet expected utility (\citet{Schmeidler-1989}), for the risk measure $\rho$.


\noindent\textbf{Axiom A1.} Comonotonic independence: for all pairwise comonotonic random variables $X, Y, Z$ and for all $\alpha\in(0, 1)$, $\rho(X)<\rho(Y)$ implies that $\rho(\alpha X+(1-\alpha)Z)<\rho(\alpha Y+(1-\alpha)Z)$.


\noindent\textbf{Axiom A2.} Monotonicity: $\rho (X)\leq \rho (Y)$,
if $X\leq Y$.

\noindent\textbf{Axiom A3.} Standardization: $\rho(x\cdot 1_{\Omega})=sx$, for all $x\in\R$, where $s>0$ is a constant.


\noindent\textbf{Axiom A4.} Law invariance: $\rho (X)=\rho (Y)$ if $X$ and $Y$ have the same distribution.

\noindent\textbf{Axiom A5.} Continuity: $\lim_{M\to\infty}\rho (\min(\max(X, -M), M))=\rho (X)$, $\forall X$.

%
Axiom A1 corresponds to the comonotonic independence axiom for the Choquet expected utility risk preferences (\citet{Schmeidler-1989}). Axiom A2 is a minimum requirement for a reasonable risk measure.
Axiom A3 with $s=1$ is used in \citet{Schmeidler-1986};
the constant $s$ in Axiom A3 can be related to the ``countercyclical indexing" risk measures proposed in \citet{Gordy-2006}, where a time-varying multiplier $s$ that increases during booms and decreases during recessions is used to dampen the procyclicality of capital requirements; see also \citet*{Brun-Peder2009}, \citet*{BCGPS2009}, and \citet*{Adrian-Shin-2014}. Axiom A4 is standard for a law invariant risk measure. Axiom A5 states that the risk measurement of an unbounded random variable can be approximated by that of  bounded random variables.

A function $h:[0, 1]\to [0, 1]$ is called a distortion function if $h(0)=0$, $h(1)=1$, and $h$ is increasing; $h$ need not be left or right continuous. As a direct application of the results in \citet{Schmeidler-1986}, we obtain the following representation of a risk measure that satisfies Axioms A1-A5.

\begin{lemma}\label{lemma:risk_measure_Schmeidler}
Let $\mathcal{X}\supset \LInf$ be a set of random variables ($\mathcal{X}$ may include unbounded random variables). A risk measure $\rho: \mathcal{X}\to \mathbb{R}$ satisfies Axioms A1-A5 if and only if there exists a distortion function $h(\cdot)$ such that
  \begin{align}
  \rho(X)&=s\int X\,d(h\circ P)\label{equ:e_78}\\
  &=s\int_{-\infty}^0 (h(P(X>x))-1)dx+s\int_0^{\infty} h(P(X>x))dx,\ \forall X\in\X,\label{equ:e_60}
  \end{align}
where the integral in \eqref{equ:e_78} is the Choquet integral of $X$ with respect to the distorted non-additive probability $h\circ P(A):=h(P(A))$, $\forall A\in\mathcal{F}$.
\end{lemma}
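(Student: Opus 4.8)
The plan is to prove both implications, treating the ``if'' direction as a verification of Axioms A1--A5 from the formula and the ``only if'' direction as the substantive step, which on bounded random variables reduces to the integral representation theorem of \citet{Schmeidler-1986}.

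For the ``if'' direction, assume $\rho$ is given by \eqref{equ:e_78}--\eqref{equ:e_60} for some distortion function $h$. Monotonicity (A2) and law invariance (A4) are immediate: the Choquet integral is monotone, and in the form \eqref{equ:e_60} it depends on $X$ only through the tail function $x\mapsto P(X>x)$, which is determined by the law of $X$. Standardization (A3) follows from $h(0)=0$ and $h(1)=1$, which give $s\int x\cdot 1_{\Omega}\,d(h\circ P)=sx$ for every $x\in\R$. For A1 I would use that the Choquet integral is positively homogeneous and additive on comonotonic random variables: if $X,Y,Z$ are pairwise comonotonic and $\alpha\in(0,1)$, then $\alpha X$ and $(1-\alpha)Z$ are comonotonic, so $\rho(\alpha X+(1-\alpha)Z)=\alpha\rho(X)+(1-\alpha)\rho(Z)$ and likewise with $Y$ in place of $X$; hence $\rho(X)<\rho(Y)$ yields $\rho(\alpha X+(1-\alpha)Z)<\rho(\alpha Y+(1-\alpha)Z)$. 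For A5, truncation at level $M$ gives $P(\min(\max(X,-M),M)>x)=P(X>x)\,1_{\{x<M\}}$ for $x\ge0$, with an analogous identity for $x<0$, so each of the two integrals in \eqref{equ:e_60} for the truncated variable is monotone in $M$; by monotone convergence it tends to the corresponding integral for $X$, and since $\rho$ is real-valued on $\X$ the limit is finite and equals $\rho(X)$.

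For the ``only if'' direction, I would first restrict $\rho$ to $\LInf$. Axiom A3 pins $\rho$ on constants, $\rho(c\cdot 1_{\Omega})=sc$; since constants are comonotonic with every random variable, Axiom A1 together with A2 lets one calibrate $\rho$ against constants and conclude that, after dividing by $s$, $\rho$ is monotone, comonotonically additive, positively homogeneous, and normalized ($\rho(1_{\Omega})=s$) on $\LInf$. The integral representation theorem of \citet{Schmeidler-1986} then yields a unique monotone set function $\nu$ on $\mathcal{F}$ with $\nu(\emptyset)=0$, $\nu(\Omega)=1$, and $\rho(X)=s\int X\,d\nu$ for all $X\in\LInf$ (monotonicity of $\nu$ also follows directly from A2 applied to $1_A\le 1_B$ for $A\subseteq B$). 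Next, A4 forces $\nu$ to factor through $P$: $\nu(A)=\nu(B)$ whenever $P(A)=P(B)$. Because the probability space carries a uniformly distributed random variable $U$ on $[0,1]$, the event $\{U\le p\}$ has probability $p$ for each $p\in[0,1]$, so $h(p):=\nu(\{U\le p\})$ is well defined, equals $\nu(A)$ for every $A$ with $P(A)=p$, satisfies $h(0)=0$, $h(1)=1$, and is increasing (compare the nested events $\{U\le p\}\subseteq\{U\le q\}$ and use monotonicity of $\nu$). Hence $\nu=h\circ P$ and $\rho(X)=s\int X\,d(h\circ P)$ on $\LInf$, which is \eqref{equ:e_78}; expanding the Choquet integral gives \eqref{equ:e_60} on $\LInf$. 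Finally, for general $X\in\X$, Axiom A5 gives $\rho(X)=\lim_{M\to\infty}\rho(\min(\max(X,-M),M))=\lim_{M\to\infty}s\int\min(\max(X,-M),M)\,d(h\circ P)$, and the monotone-convergence computation from the ``if'' direction identifies this limit with $s\int X\,d(h\circ P)$.

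The main obstacle is the first step of the ``only if'' direction: extracting comonotonic additivity (equivalently, the affine behavior of $\rho/s$ on each maximal set of pairwise comonotonic random variables) from the strict-inequality Axiom A1, which requires the usual care in passing from a strict independence axiom to a functional equation---handling indifference by squeezing $X$ between the constants $(\rho(X)/s\pm\varepsilon)1_{\Omega}$ and substituting inside mixtures. This is exactly the point at which the machinery of \citet{Schmeidler-1986} (in the spirit of the comonotonic-independence representation in \citet{Schmeidler-1989}) is invoked. The remaining ingredients---the factorization $\nu=h\circ P$, which crucially uses the richness of the probability space, and the extension from $\LInf$ to $\X$ via A5---are comparatively routine.
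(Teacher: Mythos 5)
Your proposal is correct and follows essentially the same route as the paper's proof: restrict to $\LInf$ and invoke the representation result of \citet{Schmeidler-1986} (the paper cites its Corollary in Section~3 directly, which internally performs the very calibration-against-constants step you flag as the main obstacle), use law invariance together with the uniform random variable to obtain $\nu=h\circ P$, extend to unbounded $X$ via Axiom A5 and monotone convergence, and verify the converse through comonotonic additivity and positive homogeneity of the Choquet integral.
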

\begin{proof} See Appendix \ref{app:proof_rm_rep}.
\end{proof}

%


Lemma \ref{lemma:risk_measure_Schmeidler} extends the representation theorem in \citet*{Wang97} as
the requirement of $\lim_{d\rightarrow 0}\rho ((X-d)^{+})=\rho (X^+)$ in their continuity axiom is not needed here.\footnote{
The axioms used in \citet*{Wang97}, including a comonotonic additivity axiom, imply Axioms A1-A5. More precisely,
let $\mathbb{Q}$ and $\mathbb{Q}^+$ denote the set of rational numbers and positive rational numbers, respectively. Without loss of generality, suppose $s=1$ in Axiom A3. (i) Their comonotonic additivity axiom  implies that $\rho(\lambda X)=\lambda\rho(X)$ for any $X$ and $\lambda\in\mathbb{Q}^+$, which in combination with their standardization axiom $\rho(1)=1$ implies $\rho(\lambda)=\lambda\rho(1)=\lambda$, $\lambda\in\mathbb{Q}^+$. Since $\rho(-\lambda)+\rho(\lambda)=\rho(0)=0$, it follows that $\rho(\lambda)=\lambda$, $\forall \lambda\in\mathbb{Q}$. Then for any $\lambda\in\mathbb{R}$, there exists $\{x_n\}\subset \mathbb{Q}$ and $\{y_n\}\subset\mathbb{Q}$ such that $x_n\downarrow \lambda$ and $y_n\uparrow \lambda$. By the monotonic axiom, $x_n=\rho(x_n)\geq \rho(\lambda)\geq \rho(y_n)=y_n$. Letting $n\to\infty$ yields $\rho(\lambda)=\lambda$, $\forall \lambda\in\mathbb{R}$; hence, Axiom A3 holds.
(ii) By the monotonic axiom, $\rho(\min(X, M))\leq \rho(\min(\max(X, -M), M))\leq \rho(\max(X, -M))$. Letting $M\to\infty$ and using the conditions $\rho(\min(X, M))\to\rho(X)$ and $\rho(\max(X, -M))\to\rho(X)$ as $M\to\infty$ in their continuity axiom, \emph{without need of the condition $\lim_{d\rightarrow 0}\rho ((X-d)^{+})=\rho (X^+)$}, Axiom A5 follows.
(iii) We then show positive homogeneity holds, i.e.  $\rho(\lambda X)=\lambda\rho(X)$ for any $X$ and any $\lambda>0$. For any $X$ and $M>0$, denote $X^M:=\min(\max(X, -M), M)$. For any $\epsilon>0$ and $\lambda>0$, there exist $\{\lambda_n\}\subset \mathbb{Q}^+$ such that $\lambda_n\to\lambda$ as $n\to\infty$ and $\lambda_n\rho(X^M)-\epsilon=\rho(\lambda_n X^M-\epsilon)\leq \rho(\lambda X^M)\leq \rho(\lambda_n X^M+\epsilon)=\lambda_n\rho(X^M)+\epsilon$. Letting $n\to\infty$ yields $\lambda\rho(X^M)-\epsilon\leq \rho(\lambda X^M)\leq \lambda\rho(X^M)+\epsilon$, $\forall \epsilon>0$. Letting $\epsilon\downarrow 0$ leads to $\rho(\lambda X^M)=\lambda\rho(X^M)$, $\forall \lambda\geq 0$. Letting $M\to\infty$ and applying Axiom A5 result in $\rho(\lambda X)=\lambda\rho(X)$, $\forall \lambda\geq 0$. Their comonotonic additivity axiom and positive homogeneity imply Axiom A1.
}
Note that in the case of random variables, the corollary in \citet[][]{Schmeidler-1986} requires the random variables to be bounded, but Lemma \ref{lemma:risk_measure_Schmeidler} does not; Axiom A5 is automatically satisfied for bounded random variables.

It is clear from \eqref{equ:e_60} that any risk measure satisfying Axioms A1-A5 is monotonic with respect to first-order stochastic dominance.\footnote{For two random variables $X$ and $Y$, if $X$ first-order stochastically dominates $Y$, then $P(X>x)\geq P(Y>x)$ for all $x$, which implies that for a risk measure $\rho$ represented by \eqref{equ:e_60}, $\rho(X)\geq \rho(Y)$.}
Many commonly used risk measures are special cases of risk measures defined in \eqref{equ:e_60}.

\begin{example}\label{ex:var}
Value-at-Risk (VaR). VaR is a quantile of the loss distribution at
some pre-defined probability level. More precisely,
let $X$ be the random loss with general distribution function $F_X(\cdot)$, which may not be continuous or strictly increasing. For a given $\alpha
\in (0,1]$, VaR of $X$ at level $\alpha$ is defined as 
\begin{equation*}
\VaR_{\alpha}(X):=F_X^{-1}(\alpha)=\inf \{x\mid F_X(x)\geq \alpha \}.
\end{equation*}
For $\alpha=0$, VaR of $X$ at level $\alpha$ is defined to be $\VaR_{0}(X):=\inf\{x\mid F_X(x)>0\}$ and $\VaR_0(X)$ is equal to the essential infimum of $X$. For $\alpha\in(0, 1]$, $\rho$ in \eqref{equ:e_60} is equal to $\VaR_{\alpha}$ if $h(x):=1_{\{x>1-\alpha\}}$; $\rho$ in \eqref{equ:e_60} is equal to $\VaR_0$ if $h(x):=1_{\{x=1\}}$. VaR is monotonic with respect to first-order stochastic dominance.
\end{example}

\begin{example}\label{ex:es}
Expected shortfall (ES). For $\alpha\in[0, 1)$, ES of $X$ at level $\alpha$ is
defined as the mean of the $\alpha$-tail distribution of $X$ (\citet{Tasche02}, \citet*{Rock02}), i.e.,
\begin{equation*}
  \text{ES}_{\alpha}(X):=\text{mean of the}\ \alpha\text{-tail distribution of}\ X
  = \int_{-\infty}^{\infty}x dF_{\alpha,X}(x),\ \alpha\in[0, 1),
\end{equation*}
where $F_{\alpha,X}(x)$ is the $\alpha$-tail distribution defined as (\citet{Rock02}):
\begin{equation*}
  F_{\alpha,X}(x):=\begin{cases}
    0, & \text{for}\ x < \VaR_{\alpha}(X)\\
    \frac{F_X(x)-\alpha}{1-\alpha}, & \text{for}\ x \geq \VaR_{\alpha}(X).
  \end{cases}
\end{equation*}
For $\alpha=1$, ES of $X$ at level $\alpha$ is defined as $\ES_1(X):=F_X^{-1}(1)$.
If the loss distribution $F_X$ is continuous, then $F_{\alpha,X}$ is the same as the conditional distribution of $X$ given that $X\geq \VaR_{\alpha}(X)$; if $F_X$ is not continuous, then $F_{\alpha,X}(x)$ is a slight modification of the conditional loss distribution. For $\alpha\in[0, 1)$, $\rho(X)$ in \eqref{equ:e_60} is equal to $\ES_{\alpha}(X)$ if
$$h(x)=\begin{cases}
            \frac{x}{1-\alpha}, & x\leq 1-\alpha,\\
            1, & x> 1-\alpha.
          \end{cases}$$
For $\alpha=1$, $\rho(X)$ in \eqref{equ:e_60} is equal to $\ES_{1}(X)$ if $h(x)=1_{\{x>0\}}$.
\end{example}

\begin{example}\label{ex:ms}
Median shortfall (MS). As we will see later, expected shortfall has several statistical drawbacks including non-elicitability and non-robustness. To mitigate the problems, one may simply use median shortfall. In contrast to ES which is the mean of the tail loss distribution, MS is the median of the same tail loss distribution. More precisely, MS of $X$ at level $\alpha\in[0, 1)$ is defined as (\citet*{KPH-2013})\footnote{The term ``median shortfall" is also used in \citet{Moscadelli-2004} and \citet{SW-2012} but is respectively defined as $\text{median}[X|X>u]$ for a constant $u$ and $\text{median}[X|X>\VaR_{\alpha}(X)]$, which are different from that defined in \citet*{KPH-2013}. In fact, the definition in the aforementioned second paper is the same as the ``tail conditional median" proposed in \citet*{KPH-2006}.}
\begin{align*}
  \text{MS}_{\alpha}(X):=\text{median of the}\ \alpha\text{-tail distribution of}\ X=F_{\alpha,X}^{-1}(\frac{1}{2})=\inf \{x\mid F_{\alpha, X}(x)\geq \frac{1}{2}\}.
\end{align*}
For $\alpha=1$, $\MS$ at level $\alpha$ is defined as
$\MS_1(X):=F_X^{-1}(1)$. Therefore, MS at level $\alpha$ can \emph{capture the tail risk and considers both the size and likelihood of losses beyond the VaR at level $\alpha$}, because it measures the median of the loss size conditional on that the loss exceeds the VaR at level $\alpha$.
It can be shown that\footnote{Indeed, for $\alpha\in(0, 1)$, by definition,
$\text{MS}_{\alpha}(X)=\inf\{x\mid F_{\alpha, X}(x)\geq \frac{1}{2}\}=\inf\{x\mid \frac{F_{X}(x)-\alpha}{1-\alpha}\geq \frac{1}{2}\}=\inf\{x\mid F_X(x)\geq \frac{1+\alpha}{2}\}=\VaR_{\frac{1+\alpha}{2}}(X)$; for $\alpha=1$, by definition, $\MS_1(X)=F_X^{-1}(1)=\VaR_1(X)$; for $\alpha=0$, by definition, $F_{0, X}=F_X$ and hence $\MS_0(X)=F_X^{-1}(\frac{1}{2})=\VaR_{\frac{1}{2}}(X)$.
}
  \begin{align*}
    &\MS_{\alpha}(X)=\VaR_{\frac{1+\alpha}{2}}(X),\ \forall X,\ \forall \alpha\in[0, 1].
  \end{align*}
Hence, $\rho(X)$ in \eqref{equ:e_60} is equal to $\text{MS}_{\alpha}(X)$ if $h(x):=1_{\{x>(1-\alpha)/2\}}$.

Since $\MS_{\alpha}=\VaR_{(1+\alpha)/2}$, $\MS_{\alpha}$ does not quantify the risk beyond $\VaR_{(1+\alpha)/2}$. However, it is also difficult to know the precise degree to which $\ES_{\alpha}$ quantifies the risk beyond $\VaR_{(1+\alpha)/2}$; in fact, just as $\MS_{\alpha}$, $\ES_{\alpha}$ can also fail to reveal large loss beyond $\VaR_{(1+\alpha)/2}$. For example, fix $c:=\VaR_{\alpha}$ and consider a sequence of $\alpha$-tail distributions $F_{\alpha,n}$ that are mixtures of translated exponential distributions and point mass distributions, which are defined by
\begin{align}\label{equ:counter_example_es}
F_{\alpha,n}(x):=\begin{cases}
  0, & \text{for}\ x < c,\\
  (1-\beta(n))(1-e^{-\lambda(x-c)})+\beta(n)1_{\{n\leq x\}},\ \beta(n):=\frac{\mu}{n-c-\frac{1}{\lambda}}, &\text{for}\ x \geq c,
\end{cases}
\end{align}
where $\lambda, \mu>0$ are constants. In other words, $F_{\alpha,n}$ is the mixture of $c+\exp(\lambda)$ (with probability $(1-\beta(n))$) and the point mass $\delta_n$ (with probability $\beta(n)$). Under $F_{\alpha,n}$, a large loss with size $n$ occurs with a small probability $\beta(n)$. For each $n$, $\ES_{\alpha,n}$, which is the mean of $F_{\alpha,n}$, is always equal to $c+\mu+\frac{1}{\lambda}$; hence, $\ES_{\alpha}$ fails in the same way as $\MS_{\alpha}$ regarding the detection of the large loss with size $n$ which may occur beyond $\VaR_{(1+\alpha)/2}$. This example shows that the degree to which $\ES_{\alpha}$ quantifies the risk beyond $\VaR_{(1+\alpha)/2}$ might also be limited. After all, $\MS_{\alpha}$ and $\ES_{\alpha}$ are respectively the median and the mean of \emph{the same} $\alpha$-tail loss distribution. The information contained in the mean of a distribution might not be more than that contained in the median of the same distribution, and vice versa.
\end{example}

\begin{example}\label{ex:gen_spec_rm}
Generalized spectral risk measures. A generalized spectral risk measure is defined by
\begin{equation}\label{equ:e_80}
\rho_{\Delta}(X):=\int_{(0,1]} F_X^{-1}(u)d\Delta(u),
\end{equation}
where $\Delta$ is a probability measure on $(0, 1]$. The class of risk measures represented by \eqref{equ:e_60} include and are strictly larger than the class of generalized spectral risk measures, as they all satisfy Axioms A1-A5.\footnote{In fact,
for any fixed $u\in(0,1]$, $F_X^{-1}(u)=\VaR_u(X)$ as a functional on $\LInf$ is a special case of the risk measure \eqref{equ:e_60}. By the proof of Lemma \ref{lemma:risk_measure_Schmeidler}, $\VaR_u$
satisfies monotonicity, positive homogeneity, and comonotonic additivity, which implies that $\rho_{\Delta}$ satisfies Axioms A1-A4 for any $\Delta$. On
$\LInf$, $\rho_{\Delta}$ automatically satisfies Axiom A5. On the other hand, for an $\alpha\in(0, 1)$,
the right quantile $q_{\alpha}^+(X):=\inf\{x\mid F_X(x)> \alpha\}$ is a special case of the risk measure defined in \eqref{equ:e_60} with $h(x)$ being defined as $h(x):=1_{\{x\geq 1-\alpha\}}$,
but it can be shown that $q_{\alpha}^+$ cannot be represented by \eqref{equ:e_80}.
Indeed, suppose for the sake of contradiction that
there exists a $\Delta$ such that $q_{\alpha}^+(X)=\rho_{\Delta}(X)$, $\forall X\in\LInf$.
Let $X_0$ have a strictly positive density on its support. Then, $F_{X_0}^{-1}(u)$ is continuous and strictly increases on $(0, 1]$. Let $c>0$ be a constant. Define $X_1=X_0\cdot 1_{\{X_0\leq F_{X_0}^{-1}(\alpha)\}}+(X_0+c)\cdot 1_{\{X_0> F_{X_0}^{-1}(\alpha)\}}$. It follows from $q_{\alpha}^+(X_1)-q_{\alpha}^+(X_0)=\rho_{\Delta}(X_1)-\rho_{\Delta}(X_0)$ that $_{\Delta}((\alpha, 1])=1$, which in combination with the strict monotonicity of $F^{-1}_{X_0}(u)$ implies that $\rho_{\Delta}(X_0)=\int_{(\alpha, 1]}F_{X_0}^{-1}(u)\Delta(du)>F_{X_0}^{-1}(\alpha)=q_{\alpha}^+(X_0)$. This contradicts to $\rho_{\Delta}(X_0)=q^+_{\alpha}(X_0)$.\label{footnote:g_spectral}}
A special case of \eqref{equ:e_80} is the spectral risk measure (\citet{Acerbi-2002}, Definition 3.1), defined as
\begin{equation}\label{equ:spectral_rm}
\rho(X)=\int_{(0,1)} F_X^{-1}(u)\phi(u)du,\ \phi(\cdot)\ \text{is increasing, nonnegative, and}\ \int_{0}^1 \phi(u)du=1.
\end{equation}
Because of the requirement that $\phi$ is increasing, the class of spectral risk measure is much smaller than the class of generalized spectral risk measure defined in \eqref{equ:e_80}.
The distinction between the spectral risk measure and that in \eqref{equ:e_80} is that the former is convex but the latter may not be convex. The convexity requires that the function $\phi$ in \eqref{equ:spectral_rm} is an increasing function. The MINMAXVAR risk measure proposed in \citet{Cherny-Madan-2009} for the measurement of trading performance is a special case of the spectral risk measure, corresponding to a  distortion function
$h(x)=1-(1-x^{\frac{1}{1+\alpha}})^{1+\alpha}$ in \eqref{equ:e_60},
where $\alpha\geq 0$ is a constant.
\end{example}





The class of risk measures satisfying Axioms A1-A5 and the class of law-invariant coherent (convex) risk measures have non-empty intersections but no one is the subset of the other. For example, expected shortfall belongs to both classes; VaR belongs to the former but not the latter. The class of risk measures satisfying Axioms A1-A5 include the class of law-invariant spectral risk measures as a strict subset. For example, VaR belongs to the former but not the latter.
The class of risk measures satisfying Axioms A1-A5 is the same as the class of ``distortion risk measure" proposed in \citet*{Wang97}. The ``distortion risk measures" sometimes refer to the class of risk measures defined in \eqref{equ:e_80}. As we point out in Example \ref{ex:gen_spec_rm}, the class of risk measures defined in \eqref{equ:e_80} is a strict subset of the class of risk measures satisfying Axioms A1-A5.

If a risk measure $\rho$ satisfies Axiom A4 (law invariance), then $\rho(X)$ only depends on $F_X$; hence, $\rho$ induces a statistical functional that maps a distribution $F_X$ to a real number $\rho(X)$. For simplicity of notation, we still denote the induced statistical functional as $\rho$. Namely, we will use $\rho(X)$ and $\rho(F_X)$ interchangeably in the sequel.

\subsection{Elicitability}\label{subsec:elicitability}

The measurement of risk of $X$ using $\rho$ may be viewed as a point forecasting problem, because the risk measurement $\rho(X)$ (or $\rho(F_X)$) summarizes the distribution $F_X$ by a real number $\rho(X)$, just as a point forecast for $X$ does. In practice, the true distribution $F_X$ is unknown and one has to find an estimate $\hat F_X$ for
forecasting the unknown true value $\rho(F_X)$. As one may come up with different procedures to forecast $\rho(F_X)$,
it is an important issue to evaluate which procedure provides a better forecast of $\rho(F_X)$.


The theory of elicitability provides a decision-theoretic foundation for effective evaluation of point forecasting procedures. Suppose one wants to forecast the realization of a random variable $Y$ using a point $x$, without knowing the true distribution $F_Y$. The expected forecasting error is given by
$$ES(x, Y)=\int S(x, y)dF_Y(y),$$
where
$S(x, y): \mathbb{R}^2\to \mathbb{R}$
is a forecasting objective function, e.g., $S(x, y)=(x-y)^2$ or $S(x,y)=|x-y|$. The optimal point forecast corresponding to $S$ is
$$\rho^*(F_Y)=\argmin_x ES(x, Y).$$
For example, when $S(x,y)=(x-y)^2$ and $S(x,y)=|x-y|$, the optimal forecast is the mean functional $\rho^*(F_Y)=E(Y)$ and the median functional $\rho^*(F_Y)=F_Y^{-1}(\frac{1}{2})$, respectively.

A statistical functional $\rho$ is elicitable if there exists a forecasting objective function $S$ such that minimizing the expected forecasting error yields $\rho$. Many statistical functionals are elicitable. For example,
the median functional is elicitable, as minimizing the expected forecasting error with $S(x,y)=|x-y|$ yields the median functional.
If $\rho$ is elicitable, then one can evaluate two point forecasting methods
by comparing their respective expected forecasting error $ES(x, Y)$. As $F_Y$ is unknown, the expected forecasting error can be approximated by the average $\frac{1}{n}\sum_{i=1}^n S(x_i, Y_i)$, where $Y_1, \ldots, Y_n$ are samples that have the distribution $F_Y$ and $x_1, \ldots, x_n$ are the corresponding point forecasts.

If a statistical functional $\rho$ is not elicitable, 
then for any objective function $S$, the minimization of the expected forecasting error does not yield the true value $\rho(F)$. Hence, one cannot tell which one of competing point forecasts for $\rho(F)$ performs the best by comparing their forecasting errors, no matter what objective function $S$ is used.

The concept of elicitability dates back to the pioneering work of \citet{Savage-1971}, \citet{Thomson-1979}, and \citet{Osband-1985} and is further developed by \citet*{Lambert-Pennock-Shoham-2008} and \citet{Gneiting-2011},
who contends that ``in issuing and evaluating point forecasts, it is essential that either the objective function (i.e., the function $S$) be specified ex ante, or
an elicitable target functional be named, such as an expectation
or a quantile, and objective functions be used that are consistent
for the target functional." \citet*{EMW-2009} also points out the critical importance of the specification of an objective function or an elicitable target functional. 


In the present paper, we are concerned with the measurement of risk, which is given by a single-valued statistical functional. Following Definition 2 in \citet{Gneiting-2011}, where the elicitability for a set-valued statistical functional is defined, we define the elicitability for a single-valued statistical functional as follows.\footnote{In Definition \ref{def:elicitable}, the requirement that $\rho(F)$ is the minimum of the set of minimizers of the expected objective function is not essential. In fact, if one replaces the first ``$\min$" in \eqref{equ:e_62} by ``\emph{$\max$}", the conclusions of the paper remain the same;
one only needs to change ``$\VaR_{\alpha}$" to the right quantile $q^+_{\alpha}$ in Theorem \ref{thm:elicitable_measures}.}

\begin{definition}\label{def:elicitable}
  A single-valued statistical functional $\rho(\cdot)$ is elicitable with respect to a class of distributions $\PSet$ if there exists a forecasting objective function $S:\mathbb{R}^2\to\mathbb{R}$ such that \begin{equation}\label{equ:e_62}
  \rho(F)=\min\left\{x\mid x\in \argmin_{x}\int S(x, y)dF(y)\right\},\ \forall F\in\PSet.
  \end{equation}
\end{definition}

In the definition, we only require that $S$ satisfies the condition that $\int S(x, y)dF(y)$ is well-defined and finite for any $F\in\PSet$. We do not need other conditions such as continuity or smoothness on $S$.

\subsection{Main Result}\label{subsec:main_results}

The following Theorem \ref{thm:elicitable_measures} shows that the median shortfall and the mean functional are
the \emph{only} risk measures that
(i) are elicitable; and (ii) have the decision-theoretic foundation of Choquet expected utility (i.e., satisfying Axioms A1-A5). Median shortfall at level $\alpha$ provides a precise description of the average size of loss beyond $\VaR_{\alpha}$ by median; whereas the mean functional captures the tail risk in the sense that knowing $E(L)$ leads to an upper bound $\frac{1}{x}E(L)$ for the tail probability $P(L>x)$ if $L\geq 0$.\footnote{We thank an anonymous referee for pointing this out to us.}


\begin{theorem}\label{thm:elicitable_measures}
Let $\rho:\X\to\mathbb{R}$ be a risk measure that satisfies Axioms A1-A5 and $\X\supset \LInf$. Let $\PSet:=\{F_X\mid X\in \X\}$.
Then, $\rho(\cdot)$ (viewed as a statistical functional on $\PSet$) is elicitable with respect to $\PSet$ if and only if one of the following two cases holds:
  \begin{itemize}
    \item[(i)]
    $\rho=\VaR_{\alpha}$ for some $\alpha\in(0, 1]$ (noting that $\MS_{\alpha}=\VaR_{\frac{\alpha+1}{2}}$ for $\alpha\in[0, 1]$). Here $\VaR_{\alpha}$ is a single valued functional as defined in Section 2.1.
    \item[(ii)]
    $\rho(F)=\int xdF(x)$, $\forall F$.
  \end{itemize}
\end{theorem}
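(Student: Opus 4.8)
The plan is to combine the representation in Lemma~\ref{lemma:risk_measure_Schmeidler} with the classical fact that an elicitable single-valued functional must have convex level sets, and then to show that the only distortion risk measures with convex level sets are the mean and the (left) quantiles. By Lemma~\ref{lemma:risk_measure_Schmeidler} we may write $\rho(X)=s\int X\,d(h\circ P)$ for a distortion function $h$; since replacing $\rho$ by $\rho/s$ preserves Axioms A1--A5 (now with constant $1$) and preserves elicitability (if $S$ elicits $\rho$ then $\widetilde{S}(x,y):=S(sx,y)$ elicits $\rho/s$), I assume $s=1$. For the ``if'' direction I would display the scoring functions directly: $\VaR_\alpha$ is elicited by the bounded generalized piecewise-linear loss $S(x,y)=(1_{\{y\le x\}}-\alpha)\bigl(g(x)-g(y)\bigr)$ with $g$ bounded and strictly increasing --- then $|S|\le 2\sup|g|$, so $\int S(x,y)\,dF(y)$ is finite for every $F$, and the set of minimizers of $\int S(x,y)\,dF(y)$ in $x$ is the $\alpha$-quantile interval whose minimum is $F^{-1}(\alpha)=\VaR_\alpha$ --- while the mean is elicited by a Bregman loss $S(x,y)=\phi(y)-\phi(x)-\phi'(x)(y-x)$ with $\phi$ strictly convex and $\phi'$ bounded (e.g.\ $\phi'=\arctan$, so that $|\phi(y)|\le C(1+|y|)$), which has finite expectation under every $F$ with $\int|y|\,dF(y)<\infty$, hence under every $F\in\PSet$ in the mean case, and whose unique minimizer is $\int y\,dF(y)$. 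Because $\MS_\alpha=\VaR_{(1+\alpha)/2}$ (Example~\ref{ex:ms}), this also gives the median shortfall.

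For the ``only if'' direction, I would first record the necessity of convex level sets (\citet{Osband-1985}, \citet{Gneiting-2011}): if $\rho(F_0)=\rho(F_1)=t$, then, writing $I_j(x):=\int S(x,y)\,dF_j(y)$, the value $t$ minimizes both $I_0$ and $I_1$ and is the smallest minimizer of each; hence $t$ minimizes $\lambda I_0+(1-\lambda)I_1$, and no $t'<t$ can (that would force $t'\in\argmin I_0$), so $\rho(\lambda F_0+(1-\lambda)F_1)=t$ for all $\lambda\in(0,1)$. Next I would convert this into a constraint on $h$ by evaluating $\rho$ on finitely supported laws: for $F=\sum_{i=1}^{n}p_i\delta_{a_i}$ with $a_1<\cdots<a_n$ one has $\rho(F)=a_1+\sum_{i=2}^{n}(a_i-a_{i-1})\,h(S_i)$ with $S_i=\sum_{j\ge i}p_j$, and mixing two such laws on a common support is affine in $(S_2,\dots,S_n)$; taking $n=3$ shows that, for all $c_1,c_2>0$, the map $(u,v)\mapsto c_1h(u)+c_2h(v)$ must have convex level sets on $\{1\ge u\ge v\ge 0\}$.

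The crux --- and the step I expect to be the main obstacle --- is to deduce from this that $h$ is either the identity or a single-jump indicator. When $h$ is continuous and strictly increasing, each level set $\{c_1h(u)+c_2h(v)=t\}$ is a curve, and a curve that is a convex set is a line segment, so $\psi(v):=h^{-1}\bigl((t-c_2h(v))/c_1\bigr)$ is affine in $v$; letting $t,c_1,c_2$ vary and substituting $w=h(v)$, this forces (via a Cauchy/Pexider-type functional equation, using the monotonicity of $g:=h^{-1}$) the function $g$, hence $h$, to be affine, i.e.\ the identity. If instead $h$ has a flat piece or a jump of size strictly less than $1$, I would exhibit, for suitable $c_1,c_2$, a level set of $(u,v)\mapsto c_1h(u)+c_2h(v)$ that is not convex --- this requires a careful case analysis without any regularity on $h$, which is where the real work lies. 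What survives is $h=\mathrm{id}$, giving $\rho(F)=\int x\,dF(x)$ (case (ii)), and the indicators $h=1_{(1-\alpha,1]}$, $\alpha\in(0,1]$, giving $\rho=\VaR_\alpha$ by Example~\ref{ex:var} (case (i), which includes $\MS_\alpha=\VaR_{(1+\alpha)/2}$); the remaining indicators $h=1_{[1-\alpha,1]}$ produce the right quantile $q^+_\alpha$, which is excluded because Definition~\ref{def:elicitable} takes the minimal minimizer --- as its accompanying footnote notes, replacing $\min$ by $\max$ would add exactly those. Combining the two directions yields the theorem.
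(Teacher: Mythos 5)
Your high-level strategy (convex level sets as a necessary condition, classification of distortion functions, then a direct check of elicitability) matches the paper's, and your ``if'' direction is fine. But there is a genuine gap in the classification step, and it is not merely that you defer the ``careful case analysis'': the classification you are aiming for is factually wrong. You claim that a flat piece of $h$, or a jump of size strictly less than $1$, can always be turned into a non-convex level set. This is false. The constant distortion $h(u)=1-c$ on $(0,1)$ gives $\rho=c\VaR_0+(1-c)\VaR_1$, and the single-point distortion $h(x)=(1-c)1_{\{x=1-\alpha\}}+1_{\{x>1-\alpha\}}$ gives $\rho=cq_\alpha^-+(1-c)q_\alpha^+$; the paper's Lemmas \ref{lemma:inf_c_sup} and \ref{lemma:lc_quantile} show that \emph{both} families have convex level sets for every $c\in[0,1]$. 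So the convex-level-set condition cannot do the work you assign to it: the correct classification (Theorem \ref{thm:convex_level_set_rm}) has four cases, not two, and the two extra families must be eliminated at the \emph{elicitability} stage by separate arguments. For $c\VaR_0+(1-c)\VaR_1$ with $c\in(0,1]$ the paper runs a direct contradiction with two-point mixtures $p\delta_u+(1-p)\delta_v$, $p\to 0$; for $cq_\alpha^-+(1-c)q_\alpha^+$ with $c\in[0,1)$ it restricts to densities supported on an interval, invokes (an adaptation of) Thomson's proposition to pin down the form $S(x,y)=A_i(x)+B_i(y)$ of any eliciting score, and then feeds in a distribution $F_0$ whose density vanishes on $(q_\alpha^-(F_0),q_\alpha^+(F_0))$ to force the whole quantile interval into the argmin, contradicting the ``minimal minimizer'' convention. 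Your one-line dismissal of $q_\alpha^+$ ``because the definition takes the minimal minimizer'' is the conclusion of that argument, not a proof of it: a priori some exotic $S$ could have $q_\alpha^+(F)$ as the \emph{smallest} minimizer, and ruling this out is exactly what Thomson's structure theorem is needed for.

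Even in the case you do sketch ($h$ continuous and strictly increasing), the paper cannot assume this regularity and instead derives it: the functional equation \eqref{equ:e_a4} obtained from three-point mixtures is first used to prove $g(0+)=0$, $g(1-)=1$, and continuity of $g$ on $(0,1)$, and only then reduced to the multiplicative Cauchy relations $g(vp)=g(v)g(p)$ and $g(1-v)=1-g(v)$, solved by dyadic induction. Your Pexider route would presumably work once continuity is established, but establishing it \emph{from convex level sets alone, for a completely arbitrary increasing $h$}, is the core technical content of the paper's Case (iii), and it is precisely the part your proposal leaves open while also mis-stating what the answer will be.
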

\begin{proof} See Appendix \ref{app:proof_elict_rm}.
\end{proof}

The major difficulty of the proof lies in that the distortion function $h(\cdot)$ in the representation equation \eqref{equ:e_60} of risk measures satisfying Axioms A1-A5 can have various kinds of discontinuities on $[0, 1]$; in particular, the proof is not based on any assumption on left or right continuity of $h(\cdot)$. The outline of the proof is as follows. First, we show that a necessary condition for $\rho$ to be elicitable is that $\rho$ has convex level sets, i.e., $\rho(F_1)=\rho(F_2)$ implies that $\rho(F_1)=\rho(\lambda F_1+(1-\lambda)F_2)$, $\forall \lambda\in(0, 1)$.
The second and the key step is to show that only four kinds of risk measures have convex level sets:
(i) $c\VaR_0+(1-c)\VaR_1$ for some constant $c\in[0, 1]$;
(ii)
$\VaR_{\alpha}$, $\alpha\in(0, 1)$, and, in particular, $\MS_{\alpha}$, $\alpha\in[0, 1)$; (iii)
$\rho=cq_{\alpha}^-+(1-c)q_{\alpha}^+$, where $\alpha\in(0, 1)$ and $c\in[0, 1)$ are constants,  $q_{\alpha}^-(F):=\inf\{x\mid F(x)\geq \alpha\}$, and $q_{\alpha}^+(F):=\inf\{x\mid F(x)> \alpha\}$;
(iv) the mean functional.
Lastly, we examine the elicitability of the aforementioned four kinds of risk measures; in particular, we show that $\rho=cq_{\alpha}^-+(1-c)q_{\alpha}^+$ for $c\in[0,1)$ is not elicitable by extending the main proposition in \citet{Thomson-1979}.


\subsection{Co-elicitability}

The co-elicitability of $k\geq 2$ statistical functionals is a \emph{weaker} notion of elicitability than the notion of elicitability of one statistical functional defined in Definition \ref{def:elicitable}. The notion of co-elicitability is formulated in \citet[][Definition 9]{Lambert-Pennock-Shoham-2008}, which we slightly generalize and rephrase in our notation as follows:\footnote{Without generalization, Definition 9 in \citet{Lambert-Pennock-Shoham-2008} can be rephrased by replacing \eqref{equ:e_98} by:
\begin{equation}
(\rho_1(F), \ldots, \rho_k(F))=\arg \min_{(x_1, \ldots, x_k)}\int S(x_1, \ldots, x_k, y)dF(y),\ \forall F\in\PSet.
\end{equation}
Hence, the only generalization lies in adding $\min\{\cdots\}$ to incorporate the case that there are more than one minimizers to the optimization problem in \eqref{equ:e_98}.}
\begin{definition}\label{def:co_elicitable}
$k\geq 2$ single-valued statistical functionals $\rho_1(\cdot), \ldots, \rho_k(\cdot)$ are called co-elicitable with respect to a class of distributions $\PSet$ if there exists a forecasting objective function $S:\mathbb{R}^{k+1}\to\mathbb{R}$ such that
\begin{align}\label{equ:e_98}
  &(\rho_1(F), \ldots, \rho_k(F))\notag\\
  =&\min\left\{(x_1, \ldots, x_k)\mid (x_1, \ldots, x_k)\in \arg \min_{(x_1, \ldots, x_k)}\int S(x_1, \ldots, x_k, y)dF(y)\right\},\ \forall F\in\PSet.
  \end{align}
\end{definition}

The notion of co-elicitability is weaker than that of elicitability because: (i) if for each $i=1, \ldots, k$, $\rho_i$ is elicitable with a corresponding forecasting objective function $S_i(\cdot, \cdot)$, then $(\rho_1, \ldots, \rho_k)$ are co-elicitable with the corresponding function $S$ being defined as $S(x_1, \ldots, x_k, y):=\sum_{i=1}^k S_i(x_i, y)$; (ii) if $(\rho_1, \ldots, \rho_k)$ are co-elicitable, it does not imply that each $\rho_i$ is elicitable.

\citet{Acerbi-Szekely-2014} show that $(\VaR_{\alpha}, \ES_{\alpha})$ are co-elicitable with respect to a class of distributions $\PSet$ which satisfy some restrictive conditions based on an intuitive argument;
\citet{Fissler-Ziegel-2015} show that $(\VaR_{\alpha}, \ES_{\alpha})$ are co-elicitable with respective to $\PSet=\{F\mid F\ \text{has a continuous density on}\ \mathbb{R}\ \text{and}\ F\ \text{has unique}\ \alpha\ \text{quantile for}\linebreak \text{all}\ \alpha\in(0, 1)\}$, and the corresponding forecasting objective function $S$ in Definition \ref{def:co_elicitable} may be specified as
\begin{align}\label{equ:e_99}
  S(x_1, x_2, y)=&(1_{\{x_1\geq y\}}-\alpha)(-G_1(-x_1) + G_1(- y))+\notag\\
  & \frac{1}{1-\alpha}G_2(-x_2)1_{\{x_1<y\}}(y-x_1)+G_2(-x_2)(x_1-x_2)-\mathcal{G}_2(-x_2),
\end{align}
where $G_1$ and $G_2$ are strictly increasing continuously differentiable functions, $G_1$ is $F$-integrable for any $F\in\PSet$, $\lim_{x\to -\infty} G_2(x)=0$, and $\mathcal{G}_2'=G_2$, e.g., $G_1(x)=x$ and $G_2(x)=e^x$.

The co-elicitability of $(\VaR_{\alpha}, \ES_{\alpha})$ implies that one can evaluate the performance of different forecasting procedures that forecast the \emph{collection} of $(\VaR_{\alpha}, \ES_{\alpha})$ by comparing their realized forecasting errors. More precisely, procedure 1 is considered to better forecast the \emph{collection} of $(\VaR_{\alpha}, \ES_{\alpha})$ than procedure 2 if
\begin{equation}\label{equ:e_84}
\frac{1}{T}\sum_{t=1}^T S(var^1_t, es^1_t, Y_t) < \frac{1}{T}\sum_{t=1}^T S(var^2_t, es^2_t, Y_t),
\end{equation}
where $(var_t^i, es^i_t)$ are the forecasts generated by the $i$th procedure at time $t$, $i=1, 2$, and $Y_t$ is the realized loss at time $t$, $t=1, \ldots, T$.


The co-elicitability of $(\ES_{\alpha}, \VaR_{\alpha})$ does not lead to a reliable method for evaluating forecasts for $\ES_{\alpha}$. More precisely, even if procedure 1 better forecasts the collection $(\VaR_{\alpha}, \ES_{\alpha})$ than procedure 2 in the sense of \eqref{equ:e_84}, procedure 1 may provide much worse forecast of $\ES_{\alpha}$ than procedure 2; this is illustrated in Example \ref{ex:coelicitable_nonsense} and Example \ref{ex:size_deteriorate} at the end of Section \ref{subsec:indirect_backtest}.




Theorem \ref{thm:elicitable_measures} identifies all elicitable risk measures within the class of risk measures that satisfy Axioms A1-A5; a counterpart of the problem studied in Theorem \ref{thm:elicitable_measures} is the following one:
For $k\geq 2$, can we identify all the $k$-tuple of risk measures $(\rho_1, \ldots, \rho_k)$ such that $(\rho_1, \ldots, \rho_k)$ are co-elicitable and each $\rho_i$ satisfies Axioms A1-A5?
Because co-elicitability is weaker than elicitability, the above problem is different from that studied in Theorem \ref{thm:elicitable_measures}; the answer to the problem does not imply Theorem \ref{thm:elicitable_measures}, and Theorem \ref{thm:elicitable_measures} does not provide a complete answer to the problem.

Some examples of risk measures that satisfy the conditions in the above open problem are provided in \citet{Fissler-Ziegel-2015}. In addition to $(\VaR_{\alpha}, \ES_{\alpha})$, $(\VaR_{\alpha_1}, \ldots, \VaR_{\alpha_k}, \sum_{i=1}^k w_i \ES_{\alpha_i})$ are shown to be co-elicitable, where $0<\alpha_1<\cdots<\alpha_k<1$, $(w_1, \ldots, w_k)$ are any weights satisfying $\sum_{i=1}^k w_i=1$ and $w_i>0$, $i=1,\ldots, k$. However, the complete answer to the open problem is not known yet; we leave it for future research.







\subsection{Backtesting a Risk Measure}

As will be shown in the following subsections, there are three approaches for backtesting a risk measure: (i) the direst backtest, which tests if the point estimate or point forecast of the risk measurement under a model is equal to the unknown true risk measurement; (ii) the indirect backtest, which can be classified into two kinds: (a) the first kind of indirect backtests examine if the entire loss distribution,
the entire tail loss distribution, or a collection of statistics including the risk measure of interest under a model are equal to the corresponding
quantities under the true underlying unknown model; (b) the second kind of indirect backtests are based on the co-elicitability of a collection of risk measures; (iii) the forecast evaluation approach based on the elicitability of the risk measure. 

We will also show in the subsections that: (i) VaR and median shortfall can be backtested by all three approaches. (ii) There have been no direct backtesting methods for expected shortfall. (iii) Indirect backtesting methods for expected shortfall have been proposed in the literature. The first kind of indirect backtesting for expected shortfall is a partial backtesting in the sense that: (a) if an indirect backtesting for expected shortfall is not rejected, it will imply that the point forecast for expected shortfall will not be rejected; (b) however, if an indirect backtesting for expected shortfall is rejected, it will be unclear whether the point forecast for expected shortfall should be rejected. The second kind of indirect backtests which are based on the co-elicitability of $(\VaR_{\alpha}, \ES_{\alpha})$ cannot answer the question whether the $\ES_{\alpha}$ forecasted under a bank's model is more accurate that that forecasted under a benchmark model.

\subsubsection{The Direct Backtesting Approach}

The direct backtesting approach is to test whether the risk measurement calculated under a model is equal to the unknown true value of risk measurement. It concerns whether the point estimate or point forecast of the risk measure is acceptable or not. For example, suppose a bank reports that the $\VaR_{99\%}$ of its trading book is 1 billion. The direct backtesting approach answers the question whether the single number 1 billion is acceptable or not.

More precisely, suppose the loss of a bank on the $t$th day is $L_t$, $t=1, 2, \ldots, T$. On each day $t-1$, the bank forecasts the risk measurement $\rho$
of $L_t$ based on the information available on day $t-1$, which is denoted as $\mathcal{F}_{t-1}$. Let $G_{t|t-1}$ denote the bank's model of the
conditional distribution of $L_t$ given $\mathcal{F}_{t-1}$, and let
$\rho^{G_{t|t-1}}(L_t)$ denote the risk measurement of $L_t$ under the model $G_{t|t-1}$. Suppose the unknown true conditional distribution of $L_{t}$ given $\mathcal{F}_{t-1}$ is $F_{t|t-1}$ and the true risk measurement is denoted as $\rho^{F_{t|t-1}}(L_t)$. Then, the direst backtesting of the risk measure $\rho$ is to test
\begin{equation}\label{equ:backtest_rho_cond}
H_0: \rho^{G_{t|t-1}}(L_t) = \rho^{F_{t|t-1}}(L_t),\ \forall t=1, \ldots, T; \ H_1: \text{otherwise}.
\end{equation}
For $\rho=\VaR_{\alpha}$, the null hypothesis in \eqref{equ:backtest_rho_cond} is equivalent to that $I_t:=1_{\{L_t > \VaR_{\alpha}(L_t)\}}$, $t=1, \ldots, T$, are i.i.d. Bernoulli($1-\alpha$) random variables (\citet{Christoffersen-1998}, Lemma 1). Based on such observation, \citet{Kupiec-1995} propose the proportion of failure test for backtesting VaR, which is closely related to the ``traffic light" approach of backtesting VaR adopted in the Basel Accord \citep{Basel-Amd-1996, Basel06}. \citet{Christoffersen-1998} propose conditional coverage and independence tests for VaR within a first-order Markov process model. For more recent development on the backtesting of VaR, see \citet{Lopez-1999a}, \citet{Lopez-1999b}, \citet{Engle-Manganelli-2004}, \citet{Chris-Pelletier-2004}, \citet{Haas-2005}, \citet{Campbell-2006}, \citet*{Christoffersen-2010}, \citet*{BCP-2011}, \citet*{Linton-2011}, etc.

As $\MS_{\alpha}=\VaR_{(1+\alpha)/2}$, the backtesting of median shortfall is exactly the same as that of VaR. In contrast, \emph{there have been no direct backtesting methods for expected shortfall} in the existing literature. The reason might be simple: The null hypothesis for direct backtesting expected shortfall is that $\ES_{\alpha}^{G_{t|t-1}}(L_t) = \ES_{\alpha}^{F_{t|t-1}}(L_t)$. It might be \emph{difficult (if not impossible)} to find a statistic whose distribution is known under the null hypothesis. In contrast, the distribution of the indicator random variable $I_t=1_{\{L_t > \VaR_{\alpha}(L_t)\}}$ is known under the null hypothesis for direct backtesting VaR, and hence $I_t$ can be used to construct test statistic for direct backtesting VaR.

\subsubsection{The Indirect Backtesting Approach}\label{subsec:indirect_backtest}

There are two kinds of indirect backtesting approaches. The first kind of indirect backtesting approach concerns whether the bank's model of the \emph{entire loss distribution} is the same as the unknown true loss distribution. More precisely, the indirect backtesting approach is to test:
\begin{equation}\label{equ:backtest_dist_cond}
H_0: G_{t|t-1}(x) = F_{t|t-1}(x),\ \forall x\in \mathbb{R},\ \forall t=1, \ldots, T; \ H_1: \text{otherwise}.
\end{equation}
If the null hypothesis is not rejected, then it will imply that $\rho^{G_{t|t-1}}(L_t) = \rho^{F_{t|t-1}}(L_t)$, i.e., the risk measurement will not be rejected; however, if the null hypothesis is rejected, then it will be unclear whether the point forecast $\rho^{G_{t|t-1}}(L_t)$ should be rejected or not. Therefore, the kind of indirect backtesting approach can only serve as a \emph{partial} backtesting of a particular risk measure. For example, suppose a bank reports that the $\ES_{99\%}$ of its trading book is 1 billion. Using the indirect backtesting approach, one can test the bank's model of the entire loss distribution. If the test is not rejected, then it will imply that the number 1 billion is acceptable; however, if the test is rejected, then it will be unclear if the number 1 billion should be accepted or rejected.

Strictly speaking, this indirect backtesting approach shall not be regarded as an approach for backtesting a particular risk measure, because the backtesting has nothing to do with any particular risk measure, although the test has partial implication on the acceptability of the point forecast of a particular risk measure.

This kind of indirect backtesting approaches have been proposed for backtesting expected shortfall in the literature. \citet{Berkowitz-2001} propose likelihood ratio tests based on censored Gaussian likelihood for the test \eqref{equ:backtest_dist_cond}. \citet{Kerkhof-Melenberg-2004} propose a functional delta method for testing the hypothesis \eqref{equ:backtest_dist_cond}.
 \citet{Acerbi-Szekely-2014} propose three indirect tests for backtesting $\ES_{\alpha}$. The first two tests are to test the \emph{entire tail loss distribution} under the assumption that $\VaR_{\alpha}$ has already been tested and that $L_1, \ldots, L_T$ are independent:
\begin{equation}\label{equ:backtest_dist_AS_1}
H_0: G_{t|t-1, \alpha}(x) = F_{t|t-1, \alpha}(x),\ \forall x\in \mathbb{R},\ \forall t=1, \ldots, T; \ H_1: \text{otherwise},
\end{equation}
where $G_{t|t-1, \alpha}$ and $F_{t|t-1, , \alpha}$ and the $\alpha$-tail distribution of $G_{t|t-1}$ and $F_{t|t-1}$ respectively (see Example \ref{ex:es} for definition of $\alpha$-tail distribution). The third test is the same as the test \eqref{equ:backtest_dist_cond}. All the three tests proposed by the authors require that one knows how to simulate random samples with distribution $G_{t|t-1}(\cdot)$ in order to simulate the test statistic and to calculate the $p$ value of the test.
\citet{Costanzino-Curran-2015} propose an approach to indirectly backtest $\ES_{\alpha}$ by testing:
\begin{align}\label{equ:backtest_tail_quantile}
H_0: & \int_{\alpha}^11_{\{L_t\leq \VaR_{p}(L_t)\}}dp, t=1, \dots, T, \text{are i.i.d.,}\ \VaR^{F_{t|t-1}}_p(L_t) = \VaR^{G_{t|t-1}}_{p}(L_t),\notag\\
&  \forall p\in [\alpha, 1), t=1, \ldots, T\notag\\
H_1: & \text{otherwise}.
\end{align}
This approach does not need to simulate random samples under the null hypothesis in order to calculate the $p$ value. \citet{McNeil-Frey-2000} assume that the loss process $\{L_t, t=1, \ldots, T\}$ follows the dynamics $L_t = m_t + s_t Z_t$, where $m_t$ and $s_t$ are respectively the conditional mean and conditional standard deviation, and $Z_t$ is a strict white noise. Under this assumption, they propose to backtest $\ES_{\alpha}$ by testing the hypothesis:
\begin{align}\label{equ:hypo_McNeil_Frey}
H_0: & m^{G_{t|t-1}}_t = m_t, s^{G_{t|t-1}}_t = s_t, \VaR^{G_{t|t-1}}_{\alpha}(L_t)=\VaR^{F_{t|t-1}}_{\alpha}(L_t),\notag\\ & \ES^{G_{t|t-1}}_{\alpha}(L_t)=\ES^{F_{t|t-1}}_{\alpha}(L_t), \forall t;\notag\\
H_1: & \text{otherwise}.
\end{align}
This test is an indirect test for $\ES_{\alpha}$ because if the null hypothesis is rejected, it is not clear if the claim $\ES^{G_{t|t-1}}_{\alpha}(L_t)=\ES^{F_{t|t-1}}_{\alpha}(L_t), \forall t$ should be rejected or not.

The second kind of indirect backtests are those based on the co-elicitability of a collection of risk measures. For example, let $(\VaR_{\alpha}^{Ben}(L_t), \ES_{\alpha}^{Ben}(L_t))$, $t=1, \ldots, T\}$, be the $(\VaR_{\alpha},\ES_{\alpha})$ forecasted under a benchmark model such as a standard model specified by the regulator. \citet{Fissler-Ziegel-Gneiting-2015} propose the following two indirect backtests for backtesting $\ES_{\alpha}$:
\begin{align}\label{equ:backtest_var_es_co}
H_0^-: & E_{t-1}[S(\VaR_{\alpha}^{G_{t|t-1}}(L_t), \ES_{\alpha}^{G_{t|t-1}}(L_t), L_t)] \geq E_{t-1}[S(\VaR_{\alpha}^{Ben}(L_t), \ES_{\alpha}^{Ben}(L_t), L_t)], \forall t\notag\\
H_1^-: & \text{otherwise};\notag\\
H_0^+: & E_{t-1}[S(\VaR_{\alpha}^{G_{t|t-1}}(L_t), \ES_{\alpha}^{G_{t|t-1}}(L_t), L_t)] \leq E_{t-1}[S(\VaR_{\alpha}^{Ben}(L_t), \ES_{\alpha}^{Ben}(L_t), L_t)], \forall t\notag\\
H_1^+: & \text{otherwise},
\end{align}
where $S(\cdot, \cdot, \cdot)$ is the forecasting objective function defined in \eqref{equ:e_99} with $G_1(x)=x$ and $G_2(x)=e^x/(1+e^x)$.

These tests are indirect backtests for $\ES_{\alpha}$ because no matter these tests are rejected or not, we do now know whether $\ES_{\alpha}^{G_{t|t-1}}$ is more accurate than $\ES_{\alpha}^{Ben}(L_t)$.
In fact, these tests are not able to find out which model gives a more accurate forecast for $\ES_{\alpha}$, as is shown in Example \ref{ex:coelicitable_nonsense}.

\begin{example}\label{ex:coelicitable_nonsense}
Suppose the true distribution of a bank's loss random variable $L$ is $N(\mu, \sigma^2)$ with $\mu=-1.5$, $\sigma=1.0$. Let $\alpha=0.975$, which is suggested in \citet{Basel-2013}. Then the true value of $(\VaR_{\alpha}(L), \ES_{\alpha}(L))$ is $(VaR_{\alpha}, ES_{\alpha})=(0.460, 0.838)$. Suppose the forecasts given by a bank's model are $(VaR_{\alpha}, x \cdot ES_{\alpha})$ and those given by a benchmark model (prefered by the regulator) are $(x \cdot VaR_{\alpha}, ES_{\alpha})$, where $0 < x < 1$; hence, the bank's model always under-forecasts $\ES_{\alpha}$ but the benchmark model always truthfully forecasts $\ES_{\alpha}$; therefore, the bank's model should be rejected. However, these tests will conclude that the bank's model are better than the benchmark model because the forecasting error of the bank's model (i.e., $E[S(VaR_{\alpha}, x \cdot ES_{\alpha}, L)]$) is always smaller than that of the benchmark model (i.e., $E[S(x \cdot VaR_{\alpha}, ES_{\alpha}, L)]$) for any $x\in (0.55, 1.0)$. In other words, even if the bank's model {\it under-forecasts} the $\ES_{\alpha}$ by as much as 45\%, it will still be \emph{wrongly} considered to be better than the benchmark model that truthfully forecasts $\ES_{\alpha}$, \emph{mainly due to the fact that co-elicitability does not imply elicitability, and some rather strange behavior of the forecasting objective function $S$ defined in \eqref{equ:e_99}}. This is illustrated by Figure \ref{fig:nonsense_coelicitability}.
\begin{figure}[htb]
\centering
  \includegraphics[keepaspectratio=true, width=\textwidth]{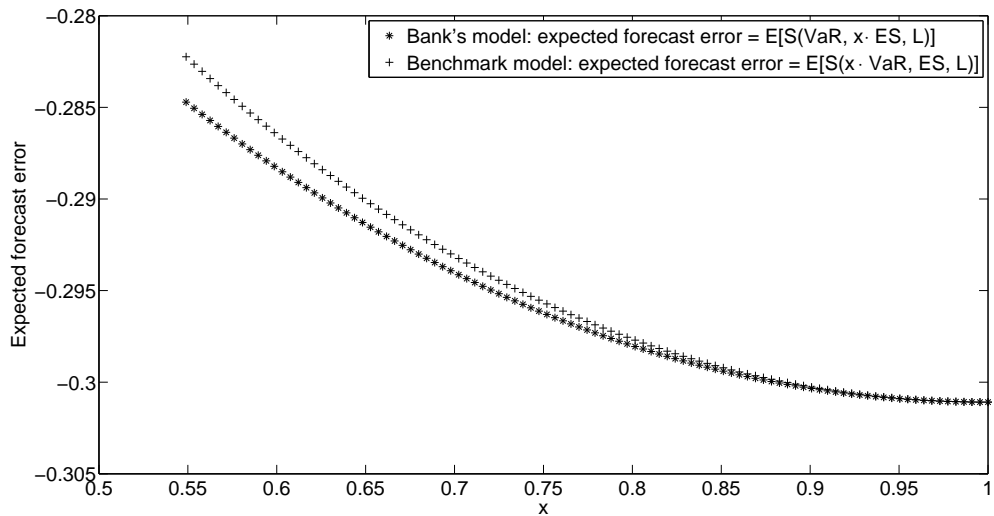}
  \caption{A graph for the counterexample in Example \ref{ex:coelicitable_nonsense}. The forecasting error of the bank's model (i.e., $E[S(VaR_{\alpha}, x \cdot ES_{\alpha}, L)]$) is always smaller than that of the benchmark model (i.e., $E[S(x \cdot VaR_{\alpha}, ES_{\alpha}, L)]$) for any $x\in (0.55, 1.0)$; therefore, the tests in \eqref{equ:backtest_var_es_co} will conclude that the bank's model better forecasts $\ES_{\alpha}$ than the benchmark model. However, the bank's model always under-forecasts $\ES_{\alpha}$, while the benchmark model always truthfully forecasts $\ES_{\alpha}$. Such inconsistency, mainly due to the fact that co-elicitability does not imply elicitability, shows that the tests in \eqref{equ:backtest_var_es_co} are not able to find out which model gives a more accurate forecast for $\ES_{\alpha}$.}
  \label{fig:nonsense_coelicitability}
\end{figure}
\end{example}


Another drawback of these backtests is that the performance of the backtests further deteriorates when the scale of the loss random variable increases, because the term $G_2(-x_2)$ in Eq. \eqref{equ:e_99} goes to zero as $x_2$ goes to infinity. The consequence is that larger banks can more easily under-report $\ES$ than smaller banks if such backtests are used for backtesting $\ES_{\alpha}$. This is illustrated in Example \ref{ex:size_deteriorate}.

\begin{example}\label{ex:size_deteriorate}
Suppose there is a larger bank whose loss random variable is $15$ times of the loss $L$ in Example \ref{ex:coelicitable_nonsense}. Thus, the loss random variable of this larger bank has a normal distribution $N(\mu, \sigma^2)$ with $\mu=-1.5\times 15$, $\sigma=15.0$. Let $\alpha=0.975$. Note the true value of $(\VaR_{\alpha}, \ES_{\alpha})$ is $(VaR_{\alpha}, ES_{\alpha})=(0.460, 0.838)\times 15$. Suppose the forecasts given by a bank's model are $(VaR_{\alpha}, x \cdot ES_{\alpha})$ and those given by a benchmark model (prefered by the regulator) are $(x \cdot VaR_{\alpha}, ES_{\alpha})$. Again, as in Figure \ref{fig:nonsense_coelicitability}, Figure \ref{fig:size_deteriorate} shows that the backtests make the wrong conclusion on which model better forecasts $\ES_{\alpha}$. In addition, Figure \ref{fig:size_deteriorate} shows that the forecasting error for the bank's model almost remain unchanged when $x\in (0.55, 1.0)$, which is due to the fact that when $ES_{\alpha}$ is large enough, the term
$E[\frac{1}{1-\alpha}G_2(-x\cdot ES_{\alpha})1_{\{VaR_{\alpha}<L\}}(L-VaR_{\alpha})+G_2(-x \cdot ES_{\alpha})(VaR_{\alpha}-x ES_{\alpha})-\mathcal{G}_2(-x \cdot ES_{\alpha})]$
in the expected forecasting error will be so small that the expected forecasting error will not change much when $x$ varies. In other words, when the scale of the loss random variable $L$ is large enough, the expected forecasting error $E[S(VaR_{\alpha}, x \cdot ES_{\alpha}, L)]$ becomes insensitive to the value of $x$. This counterexample happens again \emph{mainly due to some strange behavior of the forecasting objective function $S$ defined in \eqref{equ:e_99}}.

\begin{figure}[htb]
\centering
  \includegraphics[keepaspectratio=true, width=\textwidth]{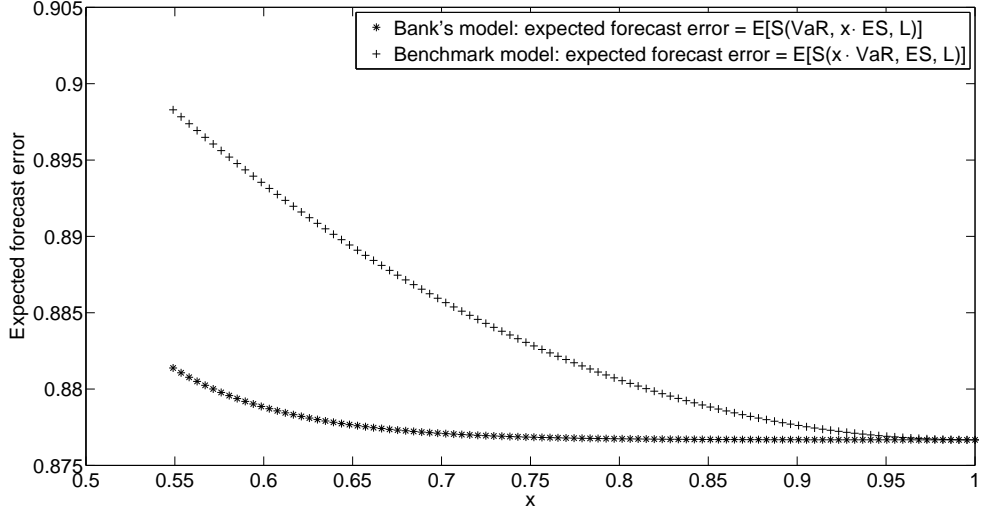}
  \caption{A graph for the counterexample in Example \ref{ex:size_deteriorate}. The expected forecasting error of the bank's model (i.e., $E[S(VaR_{\alpha}, x \cdot ES_{\alpha}, L)]$) in Example \ref{ex:size_deteriorate}
   almost remain unchanged when $x\in (0.55, 1.0)$, because when $ES_{\alpha}$ is large enough, the term $E[\frac{1}{1-\alpha}G_2(-x\cdot ES_{\alpha})1_{\{VaR_{\alpha}<L\}}(L-VaR_{\alpha})+G_2(-x \cdot ES_{\alpha})(VaR_{\alpha}-x ES_{\alpha})-\mathcal{G}_2(-x \cdot ES_{\alpha})]$ in the expected forecasting error will be so small that the expected forecasting error will not change much when $x$ varies. In other words, when the scale of the loss random variable $L$ is large enough, the expected forecasting error $E[S(VaR_{\alpha}, x \cdot ES_{\alpha}, L)]$ becomes insensitive to the value of $x$.}
   \label{fig:size_deteriorate}
\end{figure}
\end{example}

\subsubsection{The Backtesting Approach Based on the Elicitability of a Risk Measure}

The backtesting approach based on the forecast evaluation framework and elicitability has been proposed to backtest VaR.
This approach requires a benchmark model because the elicitability concerns the comparison of multiple models rather than the validation of a single model. \citet{Lopez-1999a} propose to define
the forecasting error for $\VaR_{\alpha}$ under the model $G_{t|t-1}$ as $\sum_{t=1}^T S(\VaR_{\alpha}^{G_{t|t-1}}(L_t), L_t)$, where
$S(\cdot, \cdot)$ is a forecast objective function (loss function). Since $\VaR_{\alpha}$ is elicitable, $S$ can be defined as $S_{\alpha}(x, y)=(1_{\{x\geq y\}} - \alpha)(x - y)$.
Then, the forecasting error is compared with a benchmark forecasting error calculated under a benchmark model to backtest $\VaR_{\alpha}$.

In contrast, expected shortfall cannot be backtested by this approach because it is not elicitable, and therefore, no function $S$ can be used to define the forecasting error.





\section{Extension to Incorporate Multiple Models}\label{sec:mult_scen}

The previous section address the issue of model uncertainty from the perspective of elicitability. Following \citet{Gilboa-Schmeidler-1989} and \citet{Hansen-Sargent-2001, HS-2007}, we further incorporate robustness by considering multiple models (scenarios).
More precisely, we consider $m$ probability measures $P_i$, $i=1, \ldots, m$ on the state space $(\Omega, \mathcal{F})$. Each $P_i$ corresponds to one model or one scenario, which may refer to a specific economic regime such as an economic boom and a financial crisis. The loss distribution of a random loss $X$
under different scenarios can be substantially different. For example, the VaR calculated under the scenario of the 2007 financial crisis is much higher than that under a scenario corresponding to a normal market condition due to the difference of loss distributions.

Suppose that under the $i$th scenario, the measurement of risk is given by $\rho_i$ that satisfy Axioms A1-A5. Then by Lemma \ref{lemma:risk_measure_Schmeidler}, $\rho_i$ can be represented by $\rho_i(X)=\int Xd(h_i\circ P_i)$, where $h_i$ is a distortion function, $i=1, \ldots, m$. We then propose the following risk measure to incorporate multiple scenarios:
\begin{equation}\label{equ:e_79}
\rho(X)=f(\rho_1(X), \rho_2(X), \ldots, \rho_m(X)),
\end{equation}
where $f:\mathbb{R}^m\to\mathbb{R}$ is called a scenario aggregation function.

We postulate that the scenario aggregation function $f$ satisfies the following axioms:

\noindent\textbf{Axiom B1.} Positive homogeneity and translation
scaling: $f(a\tilde{x}+b\mathbf{1})=af(\tilde{x})+sb,\ \forall \tilde{%
x}\in \mathbb{R}^{m},\forall a\geq 0,\forall b\in \mathbb{R}$, where $s>0$ is a constant and $\mathbf{1}:=(1,1,...,1)\in
\mathbb{R}^{m}$.

\noindent\textbf{Axiom B2.} Monotonicity: $f(\tilde{x})\leq f(%
\tilde{y})$, if $\tilde{x}\leq \tilde{y}$, where $\tilde{x}\leq
\tilde{y}$ means $x_i\leq y_i, i=1, \ldots, m$.

\noindent\textbf{Axiom B3.} Uncertainty aversion: if $f(\tilde x)=f(\tilde y)$, then for any $\alpha\in(0, 1)$, $f(\alpha\tilde x+(1-\alpha)\tilde y)\leq f(\tilde x)$.

Axiom B1 states that if the risk measurement of $Y$ is an affine function of that of $X$ under each scenario, then the aggregate risk measurement of $Y$ is also an affine function of that of $X$. Axiom B2 states that if the risk measurement of $X$ is less than or equal to that of $Y$ under each scenario, then the aggregate risk measurement of $X$ is also less than or equal to that of $Y$. 
Axiom B3 is proposed by \citet{Gilboa-Schmeidler-1989} to ``capture the phenomenon of hedging"; it is used as one of the axioms for the maxmin expected utility that incorporates robustness.

\begin{lemma}\label{lemma:scenario_agg}
  A scenario aggregation function $f:\mathbb{R}^m\to\mathbb{R}$ satisfies Axioms B1-B3 if and only if there exists a
  set of
weights $\mathcal{W}=\{\tilde w\}\subset \mathbb{R}^{m}$ with each
$\tilde w=(w_1,\ldots,
w_m)\in \mathcal{W}$ satisfying $w_i\geq 0$ and
$\sum_{i=1}^{m}w_i=1$, such that
  \begin{equation}\label{equ:e_81}
  f(\tilde x)= s\cdot\sup_{\tilde w\in \mathcal{W}}\left\{\sum_{i=1}^{m}w_ix_i\right\}, \forall \tilde x\in\mathbb{R}^m.
  \end{equation}
\end{lemma}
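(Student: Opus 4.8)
The plan is to normalize to the case $s=1$ (replace $f$ by $f/s$, which still satisfies Axioms B1--B3 with the constant equal to $1$, since B2 and B3 are preserved under positive scaling), to show that such an $f$ is a finite sublinear functional on $\mathbb{R}^m$, and then to identify it — via the standard support-function representation — with $\tilde x\mapsto\sup_{\tilde w\in\mathcal W}\langle\tilde w,\tilde x\rangle$ for a compact convex $\mathcal W$ contained in the probability simplex; restoring the factor $s$ then gives \eqref{equ:e_81}.

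For the ``if'' direction, given the representation \eqref{equ:e_81} all three axioms are immediate. Axiom B1 holds because $\sum_i w_i=1$ lets the affine reparametrization pass through the supremum — here one uses $a\ge 0$ so the supremum is preserved rather than turned into an infimum — giving $f(a\tilde x+b\mathbf 1)=s\sup_{\tilde w\in\mathcal W}(a\sum_i w_ix_i+b)=af(\tilde x)+sb$. Axiom B2 holds since each $\tilde x\mapsto\sum_i w_ix_i$ is nondecreasing when $w_i\ge 0$. Axiom B3 holds because a supremum of affine functions is convex, and a convex function taking a common value at $\tilde x$ and $\tilde y$ takes a value no larger at every convex combination.

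For the ``only if'' direction, first extract from Axiom B1 the elementary consequences: positive homogeneity $f(a\tilde x)=af(\tilde x)$ for $a\ge 0$ (set $b=0$), constant-translation invariance $f(\tilde x+b\mathbf 1)=f(\tilde x)+b$ (set $a=1$), and in particular $f(0)=0$ and $f(b\mathbf 1)=b$. Next, upgrade Axiom B3 to full convexity: given $\tilde x,\tilde y$ and $\alpha\in(0,1)$, subtract the constants $f(\tilde x)\mathbf 1$ and $f(\tilde y)\mathbf 1$ so both translated vectors are sent to $0$, apply B3 to their $\alpha$-combination to get a value $\le 0$, and translate back, obtaining $f(\alpha\tilde x+(1-\alpha)\tilde y)\le\alpha f(\tilde x)+(1-\alpha)f(\tilde y)$. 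Convexity together with positive homogeneity gives subadditivity, so $f$ is a finite sublinear functional on $\mathbb{R}^m$. Now invoke the standard fact that such a functional is the support function of the nonempty compact convex set $\mathcal W:=\{\tilde w\mid\langle\tilde w,\tilde x\rangle\le f(\tilde x)\ \forall\tilde x\}=\partial f(0)$: the bound $f(\tilde x)\ge\sup_{\tilde w\in\mathcal W}\langle\tilde w,\tilde x\rangle$ is by definition, the reverse comes from a one-dimensional Hahn--Banach extension along each ray $\mathbb R\tilde x_0$ (legitimate since $0=f(0)\le f(\tilde x_0)+f(-\tilde x_0)$ gives $f(-\tilde x_0)\ge -f(\tilde x_0)$), and boundedness of $\mathcal W$ follows by testing against $\pm e_i$. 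Finally, read off the constraints on $\mathcal W$ from the remaining axioms: $f(-e_j)\le f(0)=0$ by B2 forces $-w_j\le 0$, i.e.\ $w_j\ge 0$, for every $\tilde w\in\mathcal W$; and $f(\mathbf 1)=1$, $f(-\mathbf 1)=-1$ force the supremum and infimum of $\sum_i w_i$ over $\mathcal W$ to both equal $1$, hence $\sum_i w_i=1$ on $\mathcal W$. Multiplying by $s$ yields \eqref{equ:e_81}.

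The only genuinely nontrivial step is the convexity upgrade of Axiom B3 and the ensuing support-function representation; the rest is bookkeeping. The point to be careful about is that $f$ is assumed real-valued (finite) on all of $\mathbb{R}^m$, which is exactly what makes sublinearity automatically yield continuity and a \emph{compact} (not merely closed) representing set $\mathcal W$, and which makes the ray-wise Hahn--Banach argument go through without domain issues.
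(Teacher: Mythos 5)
Your proof is correct, and it differs from the paper's in a substantive way. The paper proves the lemma by showing that Axioms B1--B3 are equivalent to Axioms C1--C4 of \citet*{KPH-2013} (with $n_i=1$) and then simply invoking Theorem 3.1 of that reference for the representation; the only work done in the paper is precisely the step you also perform, namely upgrading Axiom B3 to subadditivity by translating $\tilde x$ and $\tilde y$ by multiples of $\mathbf 1$ so that both land at $f$-value zero and then applying B3 at $\alpha=\tfrac12$ together with positive homogeneity. Where you diverge is that you do not outsource the representation: you derive it from scratch by observing that B1 and the convexity upgrade make $f$ a finite sublinear functional, identifying $f$ with the support function of $\partial f(0)$ via a ray-wise Hahn--Banach extension (your justification $f(-\tilde x_0)\ge -f(\tilde x_0)$ from $f(0)\le f(\tilde x_0)+f(-\tilde x_0)$ is exactly what is needed there), and then reading off $w_j\ge 0$ from $f(-e_j)\le 0$ and $\sum_i w_i=1$ from $f(\mathbf 1)=1$ and $f(-\mathbf 1)=-1$. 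Your route is longer but self-contained and makes the convex-duality structure explicit, and as a bonus it shows the representing set can be taken compact and convex; the paper's route is shorter but depends on an external theorem whose proof is not reproduced. Both directions of your argument, including the verification of B1--B3 from the representation, check out.
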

\begin{proof} First, we show that Axioms B1-B3 are equivalent to the Axioms C1-C4 in \citet*{KPH-2013} with $n_i=1$, $i=1, \ldots, m$. Axioms B1 and B2 are the same as the Axioms C1 and C2, respectively. Axiom C4 holds for any function when $n_i=1$, $i=1, \ldots, m$. Axioms C1 and C3 apparently implies Axiom B3. We will then show that Axiom B1 and B3 imply Axiom C3. In fact, For any $\tilde x$ and $\tilde y$, it follows from Axiom B1 that $f(\tilde x-f(\tilde x)/s)=f(\tilde y-f(\tilde y)/s)=0$. Then, it follows from Axioms B1 and B3 that
$f(\tilde x+\tilde y)-f(\tilde x)-f(\tilde y)=f(\tilde x -f(\tilde x)/s +\tilde y-f(\tilde y)/s)=2f(\frac{1}{2}(\tilde x -f(\tilde x)/s) +\frac{1}{2}(\tilde y-f(\tilde y)/s))\leq 2f(\tilde x -f(\tilde x)/s)=0$.
Hence, Axiom C3 holds. Therefore, Axioms B1-B3 are equivalent to Axioms C1-C4, and hence the conclusion of the lemma follows from Theorem 3.1 in \citet*{KPH-2013}.
\end{proof}
In the representation \eqref{equ:e_81},
each weight $\tilde w\in\mathcal{W}$ can be regarded as a prior probability on the set of scenarios; more precisely, $w_i$ can be viewed as the likelihood that the scenario $i$ happens.

Lemma \ref{lemma:risk_measure_Schmeidler} and Lemma \ref{lemma:scenario_agg} lead to the following class of risk measures:\footnote{\citet{Gilboa-Schmeidler-1989} consider $\inf_{P\in \mathcal{P}}\int u(X)\,dP$ without $h_i$; see also \citet{Xia-2013}.}
\begin{equation}\label{equ:e_83}
\rho(X)=s\cdot\sup_{\tilde w\in \mathcal{W}}\left\{\sum_{i=1}^{m}w_i\int X\,d(h_i\circ P_i)\right\}.
\end{equation}
By Theorem \ref{sec:main_results}, the requirement of elicitability under each scenario leads to the following tail risk measure
\begin{equation}\label{equ:e_82}
\rho(X)=s\cdot\sup_{\tilde w\in \mathcal{W}}\left\{\sum_{i=1}^{m}w_i\MS_{i,\alpha_i}(X)\right\},
\end{equation}
where $\MS_{i,\alpha_i}(X)$ is the median shortfall of $X$ at confidence level $\alpha_i$ calculated under the $i$th scenario (model). The risk measure $\rho$ in \eqref{equ:e_82}
addresses the issue of model uncertainty and incorporate robustness from two aspects: (i) under each scenario $i$, $\MS_{i,\alpha_i}$ is elicitable and statistically robust (\citet*{KPH-2006, KPH-2013} and \citet*{CDS-2010}); (ii) $\rho$ incorporates multiple scenarios and multiple priors on the set of scenarios.

\section{Application to Basel Accord Capital Rule for Trading Books}\label{sec:app}

What risk measure should be used for setting capital requirements for banks is an important issue that has been under debate since the 2007 financial crisis. The Basel II use a 99.9\% VaR for setting capital requirements for banking books of financial institutions (\citet{Gordy-2003}).
The Basel II capital charge for the trading book on the $t$th day is specified as
$\rho_t(X_t, X_{t-1}, \ldots, X_{t-59}):=s_t\max\left\{\frac{1}{s_t}\VaR_{t-1}(X_t), \frac{1}{60}\sum_{i=1}^{60}
\VaR_{t-i}(X_{t-i+1})\right\}$, where $X_{t-i}$ is the trading book loss on the $(t-i)$th day; $s_t\geq 3$ is a constant that is specified by the regulator based on the backtesting result of the institution's VaR model;
$\VaR_{t-i}(X_{t-i+1})$ is the 10-day VaR at 99\% confidence level calculated on day $t-i$, which corresponds to the $i$th model, $i=1,\ldots,60$. Define the 61th model under which $X=0$ with probability one. Assume that the trading book composition and the size of the positions remain the same over the 60 day periods. Then, $X_t, X_{t-1}, \ldots, X_{t-59}$ can be regarded as the realization of the same random loss under different distributions. In such case, the Basel II risk measure is a special case of the class of risk measures considered in \eqref{equ:e_82}; it
incorporates 61 models and two priors: one is $\tilde w=(1/s, 0, \ldots, 0, 1-1/s)$, the other $\tilde w=(1/60, 1/60, \ldots, 1/60, 0)$. The Basel 2.5 risk measure (\citet{Basel09}) mitigates the procyclicality of the Basel II risk measure by incorporating the ``stressed VaR" calculated under stressed market conditions such as financial crisis. The Basel 2.5 risk measure can also be written in the form of \eqref{equ:e_82}. 

In a consultative document released by the Bank for International Settlement (\citet{Basel-2013}), the Basel Committee proposes to ``move from value-at-risk to expected shortfall," which ``measures the riskiness of a position by considering both the size and the likelihood of losses above a certain confidence level." The proposed new Basel (called Basel 3.5) capital charge for the trading book measured on the $t$th day is defined as
$\rho_t(X_t, X_{t-1}, \ldots, X_{t-59}):=s\max\left\{\frac{1}{s}\ES_{t-1}(X_t), \frac{1}{60}\sum_{i=1}^{60}
\ES_{t-i}(X_{t-i+1})\right\}$, where $\ES_{t-i}(X_{t-i+1})$ is the ES at 97.5\% confidence level calculated on day $t-i$, $i=1,\ldots,60$. Assume that the trading book composition and the size of the positions remain the same over the 60 day periods. Then, the proposed Basel 3.5 risk measure is a special case of the class of risk measures considered in \eqref{equ:e_83}.\footnote{The Basel II, Basel 2.5, and newly proposed risk measure (Basel 3.5) for the trading book are also special cases of the class of risk measures called natural risk statistics proposed by \citet*{KPH-2013}. The natural risk statistics are axiomatized by a different set of axioms including a comonotonic subadditivity axiom.}

The major argument for the change from VaR to ES is that ES better captures tail risk than VaR. The statement that the 99\% VaR is 100 million dollars does no carry information as to the size of loss in cases when the loss does exceed 100 million; on the other hand, the 99\% ES measures the mean of the size of loss given that the loss exceeds the 99\% VaR.

Although the argument sounds reasonable, ES is not the only risk measure that captures tail risk; in particular, an alternative risk measure that captures tail risk is median shortfall (MS), which, in contrast to expected shortfall, measures the median rather than the mean of the tail loss distribution. For instance, in the aforementioned example, if we want to capture the size and likelihood of loss beyond the 99\% VaR level, we can use either ES at 99\% level, or, alternatively, MS at 99\% level.

MS may be preferable than ES for setting capital requirements in banking regulation because (i) MS is elicitable but ES is not; and (ii) MS is robust but ES is not (\citet*{KPH-2006, KPH-2013} and \citet*{CDS-2010}). \citet*{KPH-2013}
show that robustness is indispensable for external risk measures used for legal enforcement such as calculating capital requirements.

To further compare the robustness of MS with ES, we carry out a simple empirical study on the measurement of tail risk of S\&P 500 daily return. We consider two IGARCH(1, 1) models similar to the model of RiskMetrics:
\begin{itemize}
  \item
      Model 1: IGARCH(1, 1) with conditional distribution being Gaussian
    \begin{equation*}
    r_t=\mu+\sigma_t\epsilon_t,\ \sigma_t^2=\beta\sigma_{t-1}^2+(1-\beta)r_{t-1}^2,
    \epsilon_t \overset{d}{\sim}N(0, 1).
    \end{equation*}
 \item
     Model 2: the same as model 1 except that the conditional distribution is specified as $\epsilon_t \overset{d}{\sim}t_{\nu}$, where $t_{\nu}$ denotes $t$ distribution with degree of freedom $\nu$.
\end{itemize}
We respectively fit the two models to the historical data of daily returns of S\&P 500 Index during 1/2/1980--11/26/2012 and then forecast the one-day MS and ES
of a portfolio of S\&P 500 stocks that is worth 1,000,000 dollars on 11/26/2012.
The comparison of the forecasts of MS and ES under the two models is shown in Table \ref{tab:MS_ES_SP}, where
$\ES_{\alpha,i}$ and $\MS_{\alpha,i}$ are the $\ES_{\alpha}$ and $\MS_{\alpha}$ calculated under the $i$th model, respectively, $i=1, 2$. It is clear from the table that the change of ES under the two models (i.e., $\ES_{\alpha,2}-\ES_{\alpha,1}$) is much larger than that of MS (i.e., $\MS_{\alpha,2}-\MS_{\alpha,1}$), indicating that ES is more sensitive to model misspecification than MS.

\begin{table}[htbp]
  \centering
  \caption{The comparison of the forecasts of one-day MS and ES of a portfolio of S\&P 500 stocks that is worth 1,000,000 dollars on 11/26/2012. $\ES_{\alpha,i}$ and $\MS_{\alpha,i}$ are the ES and MS at level $\alpha$ calculated under the $i$th model, respectively, $i=1, 2$. It is clear that the change of ES under the two models (i.e., $\ES_{\alpha,2}-\ES_{\alpha,1}$) is much larger than that of MS (i.e., $\MS_{\alpha,2}-\MS_{\alpha,1}$).}
  \resizebox{\linewidth}{!}{
    \begin{tabular}{cccccccc}
    \addlinespace
    \toprule
    \multirow{2}[0]{*}{$\alpha$} & \multicolumn{3}{c}{ES} & \multicolumn{3}{c}{MS} & \multirow{2}[0]{*}{$\frac{\ES_{\alpha,2}-\ES_{\alpha,1}}{\MS_{\alpha,2}-\MS_{\alpha,1}}-1$} \\
    \cline{2-7}
    \addlinespace
          & $\ES_{\alpha,1}$ & $\ES_{\alpha,2}$ & $\ES_{\alpha,2}-\ES_{\alpha,1}$ & $\MS_{\alpha,1}$ & $\MS_{\alpha,2}$ & $\MS_{\alpha,2}-\MS_{\alpha,1}$ &  \\
    97.0\% & 19956 & 21699 & 1743  & 19070 & 19868 & 798   & 118.4\% \\
    97.5\% & 20586 & 22690 & 2104  & 19715 & 20826 & 1111  & 89.3\% \\
    98.0\% & 21337 & 23918 & 2581  & 20483 & 22011 & 1529  & 68.8\% \\
    98.5\% & 22275 & 25530 & 3254  & 21441 & 23564 & 2123  & 53.3\% \\
    99.0\% & 23546 & 27863 & 4317  & 22738 & 25807 & 3070  & 40.6\% \\
    99.5\% & 25595 & 32049 & 6454  & 24827 & 29823 & 4996  & 29.2\% \\
    \bottomrule
    \end{tabular}%
    }
  \label{tab:MS_ES_SP}%
\end{table}%



\section{Comments}\label{sec:comm}

\subsection{Criticism of Value-at-Risk}

As pointed out by \citet[][]{Aumann-Serrano-2008}, ``like any index or summary statistic, \dots, the riskiness
index summarizes a complex, high-dimensional object by a single number. Needless to say, no index captures all the relevant aspects of the
situation being summarized." Below are some popular criticisms of VaR in the literature.

(i) The VaR at level $\alpha$ does not provide information regarding the size of the tail loss distribution beyond $\VaR_{\alpha}$. However, the median shortfall at level $\alpha$ does address this issue by measuring the median size of the tail loss distribution beyond $\VaR_{\alpha}$.

(ii) There is a  pathological counterexample that, for some level $\alpha$, the $\VaR_{\alpha}$ of a fully concentrated portfolio might be smaller than that of a fully diversified portfolio, which is against the economic intuition that diversification reduces risk; see Example 6.7 in \citet[][p. 241]{MFE2005}.
However, this counterexample disappears if $\alpha > 98\%$.

(iii) VaR does not satisfy the mathematical axiom of subadditivity (\citet{Huber-1981}, \citet{Artz99})\footnote{The representation theorem in \citet{Artz99} is based on  \cite{Huber-1981}, who use the same set of axioms. \citet{Gilboa-Schmeidler-1989} obtains a more general representation based on a different set of axioms.}. However, the subadditivity axiom is somewhat controversial: (1) The subadditivity axiom is based on an intuition that ``a merger does not create extra risk" (\citet{Artz99}, p. 209), which may not be true, as can be seen from the merger of Bank
of America and Merrill Lynch in 2008.
(2) Subadditivity is related to the idea that diversification is beneficial; however, diversification may not always be beneficial. \citet[][pp. 271--272]{Fama-Miller-1972} show that diversification is ineffective for asset returns with heavy tails (with tail index less than 1); these results are extended in  \citet{Ibragimov-Walden-2007} and \citet{Ibragimov-2009}. See \citet*[][Sec. 6.1]{KPH-2013} for more discussion.
(3) Although subadditivity ensures that $\rho(X_1)+\rho(X_2)$ is an upper bound for $\rho(X_1+X_2)$, this upper bound may not be valid in face of model uncertainty.\footnote{In fact, suppose we are concerned with obtaining an upper bound for $\ES_{\alpha}(X_1+X_2)$. In practice, due to model uncertainty, we can only compute $\widehat{\ES}_{\alpha}(X_1)$ and $\widehat{\ES}_{\alpha}(X_2)$, which are estimates of $\ES_{\alpha}(X_1)$ and $\ES_{\alpha}(X_2)$ respectively. $\widehat{\ES}_{\alpha}(X_1)+\widehat{\ES}_{\alpha}(X_2)$ cannot be used as an upper bound for $\ES_{\alpha}(X_1+X_2)$ because it is possible that $\widehat{\ES}_{\alpha}(X_1)+\widehat{\ES}_{\alpha}(X_2)< \ES_{\alpha}(X_1)+\ES_{\alpha}(X_2)$.}
(4) In practice, $\rho(X_1)+\rho(X_2)$ may not be a useful upper bound for $\rho(X_1+X_2)$ as the former may be too much larger than the latter.\footnote{For example,
let $X_1$ be the loss of a long position of a call option on a stock (whose price is \$100) at strike \$100 and let $X_2$ be the loss of a short position of a call option on that stock at strick \$95. Then the margin requirement for $X_1+X_2$, $\rho(X_1+X_2)$, should not be larger than \$5, as $X_1+X_2\leq 5$. However, $\rho(X_1)=0$ and $\rho(X_2)\approx 20$ (the margin is around 20\% of the underlying stock price). In this case, no one would use the subadditivity to charge the upper bound $\rho(X_1)+\rho(X_2)\approx 20$ as the margin for the portfolio $X_1+X_2$; instead, people will directly compute $\rho(X_1+X_2)$.} (5) Subadditivity is not necessarily needed for capital allocation or asset allocation.\footnote{
\citet*[][Sec. 7]{KPH-2013}
derive the Euler capital allocation rule for a class of risk measures including VaR with scenario analysis and the Basel Accord risk measures.
see \citet*{ShiWerker-2012}, \cite*{WPLBS-2013}, \cite*{XCL-2013}, and the references therein for asset allocation methods using VaR and Basel Accord risk measures.}
(6) It is often argued that if a non-subadditive risk measure is used in determining the regulatory capital for a financial institution, then to reduce its regulatory capital, the institution has an incentive to legally break up into various subsidiaries.
However, breaking up an institution into subsidiaries may not be bad, as it prevents the loss of one single business unit from causing the bankruptcy of the entire institution.
On the contrary, if a subadditive risk measure is used, then that institution has an incentive to merge with other financial institutions, which may lead to financial institutions that are too big to fail. Hence, it is not clear by using this type of argument alone whether a risk measure should be subadditive or not.

Even if one believes in subadditivity, VaR (and median shortfall) satisfies subadditivity in most relevant situations. In fact, \citet*{Daielsson-2013} show that VaR (and median shortfall) is subadditive in the relevant tail region if asset returns are regularly varying and possibly dependent, although VaR does not satisfy global subadditivity. \citet{Ibragimov-Walden-2007} and \citet{Ibragimov-2009} show that VaR is subadditive for the infinite variance stable distributions with finite mean. ``In this sense, they showed that VaR is subadditive for the tails of all fat distributions, provided the tails are not super fat (e.g., Cauchy distribution)" (\citet*{Linton-2011}). \citet*{GRT-2007} stress that ``tail thickness required [for VaR] to violate subadditivity, even for small probabilities, remains an extreme situation because it corresponds to such poor conditioning information that expected loss appears to be infinite."

(iv) \cite{EWW2014} argue that ``with respect to dependence uncertainty in aggregation, VaR is less robust compared to expected shortfall" because
VaR is not aggregation-robust but expected shortfall is. However, their counterexample (i.e., their Example 2.1) only shows that $\VaR$ may not be aggregation-robust at the level $\alpha$ such that $F^{-1}(\cdot)$ is not continuous at $\alpha$.
There are only at most a countable number of such $\alpha$; in fact, if $F$ is a continuous distribution, then no such $\alpha$ exists. On the contrary, for any other $\alpha$, VaR at level $\alpha$ is aggregation-robust, because VaR at level $\alpha$ is Hampel-robust and Hampel-robustness implies aggregation-robustness;\footnote{Aggregation-robustness is a notion of robustness that is weaker than Hampel-robustness. By Theorem 2.21 in \citet{Huber-Ronchetti-2009}, a risk measure (statistical functional) $\rho$ is Hampel-robust at a distribution $F$ is essentially equivalent to that $\rho$ is weakly continuous at $F$. More precisely, if $\rho$ is Hampel robust at $F$, then for any $\epsilon>0$, there exists $\delta>0$ such that for $\forall G\in \mathcal{N}_{\delta}(F):=\{H\mid d(F, H)<\delta\}$, it holds that $|\rho(F)-\rho(G)|<\epsilon$. In contrast, $\rho$ is aggregation-robust at $F$ means that for any $\epsilon>0$, there exists $\delta>0$ such that for $\forall G\in \mathcal{N}_{\delta}(F)\cap \mathcal{A}_{F}$, it holds that $|\rho(F)-\rho(G)|<\epsilon$, where $\mathcal{A}_F:=\{H\mid \text{There exist integer}\ m>0\ \text{and random variables}\ X_1, \ldots,\allowbreak X_m,\allowbreak X'_1,\allowbreak \ldots,\allowbreak X'_m,\allowbreak \text{such that}\allowbreak\ X_i\overset{d}{\sim}X'_i, i=1, \ldots, m, \sum_{i=1}^m X_i\overset{d}{\sim}F,\ \text{and}\ \sum_{i=1}^m X'_i\overset{d}{\sim}H.\}$. Since $\mathcal{N}_{\delta}(F)\cap \mathcal{A}_F \varsubsetneq \mathcal{N}_{\delta}(F)$, aggregation robustness is weaker than Hampel-robustness.} note that by Corollary 3.7 of \citet*{CDS-2010} expected shortfall is not Hampel-robust.


(iv) Expected shortfall is more conservative than VaR because $\ES_{\alpha}>\VaR_{\alpha}$. This argument is misleading because $\ES$ at level $\alpha$ should be compared with $\VaR$ at level $(1+\alpha)/2$ (i.e. $\MS$ at level $\alpha$). $\ES_{\alpha}$ may be smaller (i.e., less conservative) than $\MS_{\alpha}$, as mean may be smaller than median. For example, if the tail loss distribution is a Weibull distribution with a shape parameter lager than 3.44, then $\ES_{\alpha}$ is smaller than $\MS_{\alpha}$ (see, e.g., \citet{Hippel-2005}).

\subsection{Other Comments}

It is worth noting that it is not desirable for a risk measure to be too sensitive to the tail risk. For example, let $L$ denote the loss that could occur to a person who walks on the street. There is a very small but positive probability that the person could be hit by a car and lose his life; in that unfortunate case, $L$ may be infinite. Hence, the ES of $L$ may be equal to infinity, suggesting that the person should never walk on the street, which is apparently not reasonable. In contrast, the MS of $L$ is a finite number.


Theorem \ref{thm:elicitable_measures} generalizes the main result in \citet{Ziegel-2013}, which shows the only elicitable spectral risk measure is the mean functional; note that VaR is not a spectral risk measure. \citet{Weber-2006} derives a characterization theorem (Theorem 3.1) for risk measures with convex acceptance set $\mathcal{N}$ and convex rejection set $\mathcal{N}^c$ under two topological conditions on $\mathcal{N}$: (1) there exists $x\in\mathbb{R}$ with $\delta_x\in\mathcal{N}$ such that for $y\in\mathbb{R}$ and $\delta_y\in \mathcal{N}^c$, $(1-\alpha)\delta_x+\alpha \delta_y\in\mathcal{N}$ for sufficiently small $\alpha>0$; (2) $\mathcal{N}$ is $\psi$-weakly closed for some gauge function $\psi: \mathbb{R}\to [1, \infty)$. That characterization theorem cannot be applied in this paper because we do not make any assumption on the forecasting objective function $S(\cdot, \cdot)$ in the definition of elicitability and hence the topological conditions may not hold. For example, the results in \citet{Bellini-Bignozzi-2013}, which rely on the characterization theorem in \citet{Weber-2006},  make strong assumptions on the forecasting objective function $S(\cdot, \cdot)$,\footnote{These assumptions include three conditions in Definition 3.1 and two conditions in Theorem 4.2: (1) $S(x, y)$ is continuous in $y$; (2) for any $x\in[-\epsilon, \epsilon]$ with $\epsilon>0$, $S(x, y)\leq \psi(y)$ for some gauge function $\psi$.} requiring a more restrictive definition of elicitability than \citet{Gneiting-2011}; under their definition, median or quantile may not be elicitable, while they are always elicitable in the sense of \citet{Gneiting-2011}. The elicitability of a risk measure is also related to the statistical theory for the evaluation of probability forecasts (\citet*{LSG-2011}).


The axioms in this paper are based on economic considerations. Other axioms based on mathematical
considerations include convexity (\citet{FollmerSchied2002}, \citet{FG2002, FG-2005}), comonotonic subadditivity (\citet*{SongYan2006, SongYan2009}, \citet*{KPH-2006, KPH-2013}), comonotonic convexity (\citet{SongYan2006, SongYan2009}).
\citet*{DVGKTV-2006} provides a survey on comonotonicity and risk measures.

\appendix



\section{Proof of Lemma \ref{lemma:risk_measure_Schmeidler}}\label{app:proof_rm_rep}
\begin{proof}
%
Without loss of generality, we only need to prove for the case $s=1$, as $\rho$ satisfies Axioms A1-A5 if and only if $\frac{1}{s}\rho$ satisfies Axioms A1-A5 (with $s=1$ in Axiom A3).

The ``only if" part. First, we show that \eqref{equ:e_60} holds for any $X\in\LInf$.
Define the set function $\nu(E):=\rho(1_E), E\in\mathcal{F}$. Then, it follows from Axiom A2 and A3 that $\nu$ is monotonic, $\nu(\emptyset)=0$, and $\nu(\Omega)=1$.
For $M\geq1$, define $\mathcal{L}^M:=\{X\mid |X|\leq M\}$.
For any $X\in\LInf$, let $M_0$ be the essential supremum of $|X|$ and denote
$X^{M_0}:=\min(M_0, \max(X, -M_0))$. Then $X^{M_0}\in\mathcal{L}^{M_0}$ and $X=X^{M_0}$ a.s., which implies that $\rho(X)=\rho(X^{M_0})$ (by Axiom A4) and $\nu(X>x)=\nu(X^{M_0}>x)$, $\forall x$. Since $\rho$ satisfies Axioms A1-A3 on $\LInf$, it follows that $\rho$ satisfies the conditions (i)-(iii)  of
the Corollary in Section 3 of \citet{Schmeidler-1986} (with $B(K)$ in the corollary defined to be $\mathcal{L}^{1+M_0}$). Hence, it follows from the Corollary that
\begin{align}\label{equ:e_61}
\rho(X)&=\rho(X^{M_0})=\int_0^{\infty}\nu( X^{M_0}>x)dx+\int^0_{-\infty}(\nu(X^{M_0}>x)-1)dx\notag\\
&=\int_0^{\infty}\nu(X>x)dx+\int^0_{-\infty}(\nu(X>x)-1)dx.
\end{align}
Let $U$ be a uniform $U(0, 1)$ random variable.
Define the function $h$ such that $h(0)=0$, $h(1)=1$, and $h(p):=\rho(1_{\{U\leq p\}})$, $\forall p\in(0, 1)$. By Axiom A4, $h(\cdot)$ satisfies $\nu(A)=h(P(A))$ for all $A$. Therefore, by \eqref{equ:e_61}, \eqref{equ:e_60} holds for $X$. In addition, for any $0<q<p<1$, $h(p)=\rho(1_{\{U\leq p\}})\geq \rho(1_{\{U\leq q\}})=h(q)$. Hence, $h$ is an increasing function.

Second, we show that \eqref{equ:e_60} holds for any (possibly unbounded) $X\in\X$. For $M>0$, since $X^M$ belongs to $\LInf$, it follows that \eqref{equ:e_60} holds for $X^M$, which implies
\begin{align*}
  \rho(X^M)&=\int_0^{\infty}h(P(X^{M}>x))dx+\int^0_{-\infty}(h(P(X^{M}>x))-1)dx\notag\\
  &=\int_0^{M}h(P(X>x))dx+\int^0_{-M}(h(P(X>x))-1)dx.
\end{align*}
Letting $M\to\infty$ on both sides of the above equation and using Axiom A5, we conclude that \eqref{equ:e_60} holds for $X$.

The ``if" part. Suppose $h$ is a distortion function and $\rho$ is defined by
\eqref{equ:e_60}. Define the set function $\nu(A):=h(P(A)), \forall A\in\mathcal{F}$. Then $\rho(X)$ is the Choquet integral of $X$ with respect to $\nu$. By definition of $\rho$ and simple verification, $\rho$ satisfies Axioms A2-A5.
It follows from \citet[][Proposition 5.1]{Denneberg94} that $\rho$ satisfies positive homogeneity and comonotonic additivity, which implies that $\rho$ satisfies Axiom A1.
%
\end{proof}

\section{Proof of Theorem \ref{thm:elicitable_measures}}\label{app:proof_elict_rm}

First, we give the following definition:\footnote{A similar definition for a set-valued (not single-valued) statistical functional is given in \citet{Osband-1985} and \citet{Gneiting-2011}.
}
\begin{definition}\label{def:convex_ls}
   A single-valued statistical functional $\rho$ is said to have convex level sets with respect to $\PSet$, if for any two distributions $F_1\in\PSet$ and $F_2\in\PSet$, $\rho(F_1)=\rho(F_2)$ implies that  $\rho(\lambda F_1+(1-\lambda)F_2)=\rho(F_1)$, $\forall \lambda\in(0, 1)$.
\end{definition}

The following Lemma \ref{lemma:elicitable_nece} gives a necessary condition for a single-valued statistical functional to be elicitable. The lemma is a variant of Proposition 2.5 of \citet{Osband-1985}, Lemma 1 of \citet*{Lambert-Pennock-Shoham-2008}, and Theorem 6 of \citet{Gneiting-2011}, which concern set-valued statistical functionals.

\begin{lemma}\label{lemma:elicitable_nece}
  If a single-valued statistical functional $\rho$ is elicitable with respect to $\PSet$, then $\rho$ has convex level sets with respect to $\PSet$.
\end{lemma}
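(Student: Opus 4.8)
The plan is to argue directly from Definition \ref{def:elicitable}. Suppose $\rho$ is elicitable with respect to $\PSet$, so there exists a forecasting objective function $S:\mathbb{R}^2\to\mathbb{R}$ with $\int S(x,y)\,dF(y)$ well-defined and finite for all $F\in\PSet$ and
\begin{equation*}
\rho(F)=\min\left\{x\mid x\in\argmin_x\int S(x,y)\,dF(y)\right\},\quad\forall F\in\PSet.
\end{equation*}
For brevity write $\bar S(x,F):=\int S(x,y)\,dF(y)$; note this is linear (affine) in $F$, i.e. $\bar S(x,\lambda F_1+(1-\lambda)F_2)=\lambda\bar S(x,F_1)+(1-\lambda)\bar S(x,F_2)$ whenever $F_1,F_2\in\PSet$ and $\lambda F_1+(1-\lambda)F_2\in\PSet$.

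First I would fix $F_1,F_2\in\PSet$ with $\rho(F_1)=\rho(F_2)=:t$, fix $\lambda\in(0,1)$, and set $F_\lambda:=\lambda F_1+(1-\lambda)F_2$, which lies in $\PSet$ since $\PSet=\{F_X\mid X\in\X\}$ and $\X\supset\LInf$ is closed under the mixture operation (a mixture of two laws in $\PSet$ is again the law of some random variable definable on the given probability space, using the uniformly distributed random variable assumed to exist). Let $m_i:=\min_x\bar S(x,F_i)=\bar S(t,F_i)$ for $i=1,2$; since $t$ minimizes $\bar S(\cdot,F_i)$ for both $i$, for every $x\in\mathbb{R}$ we have $\bar S(x,F_i)\ge m_i$, hence by affinity $\bar S(x,F_\lambda)=\lambda\bar S(x,F_1)+(1-\lambda)\bar S(x,F_2)\ge\lambda m_1+(1-\lambda)m_2=\bar S(t,F_\lambda)$. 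Therefore $t$ minimizes $\bar S(\cdot,F_\lambda)$, so $t\in\argmin_x\bar S(x,F_\lambda)$ and consequently $\rho(F_\lambda)=\min\{x\mid x\in\argmin_x\bar S(x,F_\lambda)\}\le t$.

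For the reverse inequality I would show that any $x'<t$ fails to minimize $\bar S(\cdot,F_\lambda)$. Since $\rho(F_i)=t$ is the \emph{minimum} of the set of minimizers of $\bar S(\cdot,F_i)$, any $x'<t$ is not a minimizer of $\bar S(\cdot,F_i)$, so $\bar S(x',F_i)>m_i$ for $i=1,2$ (note: strictly, for at least one $i$ it could be that $x'$ is still a minimizer of $\bar S(\cdot,F_i)$ if $i\ne$ the one where $t$ is the min — so the clean statement is that $\bar S(x',F_i)\ge m_i$ for both $i$, with strict inequality for at least one $i$, because $t$ being the smallest minimizer for each $F_i$ rules out $x'$ being a minimizer for \emph{both}). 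Hence $\bar S(x',F_\lambda)=\lambda\bar S(x',F_1)+(1-\lambda)\bar S(x',F_2)>\lambda m_1+(1-\lambda)m_2=\bar S(t,F_\lambda)$, so $x'\notin\argmin_x\bar S(x,F_\lambda)$. This shows $\rho(F_\lambda)\ge t$, and combined with the previous paragraph, $\rho(F_\lambda)=t=\rho(F_1)$, which is exactly convexity of level sets.

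The main obstacle I anticipate is the careful bookkeeping around the ``$\min$'' in Definition \ref{def:elicitable}: I must make sure I only ever use that $t$ is \emph{some} minimizer of each $\bar S(\cdot,F_i)$ (giving the $\le$ direction) and that $t$ is the \emph{smallest} minimizer of each $\bar S(\cdot,F_i)$ (giving, for $x'<t$, that $x'$ cannot simultaneously be a minimizer for both $F_1$ and $F_2$, which is what forces strict inequality after mixing). A secondary point to state explicitly is that $F_\lambda\in\PSet$, so that $\bar S(\cdot,F_\lambda)$ is defined and the affinity identity applies; this follows from the hypothesis $\PSet=\{F_X\mid X\in\X\}$ with $\X\supset\LInf$ together with the richness of $(\Omega,\mathcal F,P)$ assumed in Section \ref{sec:main_results}.
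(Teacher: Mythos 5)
Your proof is correct and follows essentially the same route as the paper's: both rest on the affinity of $F\mapsto\int S(x,y)\,dF(y)$ in $F$, and your step excluding minimizers $x'<t$ of $\int S(\cdot,y)\,dF_\lambda(y)$ is just the contrapositive of the paper's argument that every minimizer for $F_\lambda$ must minimize for both $F_1$ and $F_2$ and hence be $\geq t$. The hedge in your parenthetical is unnecessary — since $\rho(F_1)=\rho(F_2)=t$, the point $t$ is the smallest minimizer for \emph{both} $F_i$, so every $x'<t$ satisfies $\bar S(x',F_i)>m_i$ strictly for both $i$ — but the weaker statement you settle on (equality allowed for at most one $i$) still yields the required strict inequality after mixing with $\lambda\in(0,1)$, so the argument goes through.
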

\begin{proof} Suppose $\rho$ is elicitable. Then there exists a forecasting objective function $S(x, y)$ such that \eqref{equ:e_62} holds. For any two distribution $F_1$ and $F_2$ and any $\lambda\in(0, 1)$,
denote $F_{\lambda}:=\lambda F_1+(1-\lambda)F_2$. If $t=\rho(F_1)=\rho(F_2)$, then $t=\min\{x\mid x\in \argmin_{x}\int S(x,y)dF_i(y)\}$, $i=1, 2$.
Since $\int S(x,y)dF_{\lambda}(y)=\lambda\int S(x,y)dF_1(y)+(1-\lambda)\int S(x,y)dF_2(y)$, it follows that $t\in \argmin_{x}\int S(x,y)dF_{\lambda}(y)$. For any $t'\in \argmin_{x}\int S(x,y)dF_{\lambda}(y)$, it holds that $\int S(t',y)dF_{\lambda}(y)\leq\int S(t,y)dF_{\lambda}(y)$, which implies that $\lambda\int S(t',y)dF_{1}(y)+(1-\lambda)\int S(t',y)dF_{2}(y)\leq\lambda\int S(t,y)dF_{1}(y)+(1-\lambda)\int S(t,y)dF_{2}(y)$. However, by definition of $t$, $\int S(t,y)dF_{i}(y)\leq \int S(t',y)dF_{i}(y), i=1, 2$. Therefore, $\int S(t,y)dF_{i}(y)=\int S(t',y)dF_{i}(y), i=1, 2$, which implies that
$t'\in\argmin_x \int S(x, y)dF_{i}(y)$, $i=1, 2$. Since $t=\min\{x\mid x\in \argmin_{x}\int S(x,y)dF_i(y)\}$, it follows that $t'\geq t$. Therefore, $t=\min\{x\mid x\in \argmin_{x}\int S(x,y)dF_{\lambda}(y)\}=\rho(F_{\lambda})$.
\end{proof}

\begin{lemma}\label{lemma:inf_c_sup}
Let $c\in[0, 1]$ be a constant. If $\rho$ is defined in \eqref{equ:e_60} with $h(u)=1-c, \forall u\in(0, 1)$, $h(0)=0$, and $h(1)=1$, then $\rho=c\VaR_0+(1-c)\VaR_1$, where $\VaR_0(F):=\inf\{x\mid F(x)>0\}$ and $\VaR_1(F):=\inf\{x\mid F(x)=1\}$. In addition, $\rho$ has convex level sets with respect to $\PSet=\{F\mid  \rho(F)\ \text{is well defined}\}$.
\end{lemma}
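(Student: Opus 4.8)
The plan is to compute $\rho(F)$ directly from the representation \eqref{equ:e_60} and then to read off the convex–level–set property from an elementary identity for the $\VaR_0$ and $\VaR_1$ of a mixture. Throughout I take $s=1$ without loss of generality, exactly as in the proof of Lemma \ref{lemma:risk_measure_Schmeidler}. Fix $F=F_X$ and set $a:=\VaR_0(F)=\inf\{x\mid F(x)>0\}$ and $b:=\VaR_1(F)=\inf\{x\mid F(x)=1\}$, so that $a\le b$. Since $F$ is nondecreasing, $F(x)=0$ for $x<a$, $F(x)=1$ for $x>b$, and $0<F(x)<1$ for $a<x<b$; hence the integrand $h(P(X>x))=h(1-F(x))$ equals $1$ on $(-\infty,a)$, equals $1-c$ on $(a,b)$, and equals $0$ on $(b,\infty)$, the two points $x=a,x=b$ being irrelevant for the Lebesgue integral in \eqref{equ:e_60}.

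Next I would split the evaluation of
\[
\rho(X)=\int_{-\infty}^{0}\bigl(h(P(X>x))-1\bigr)\,dx+\int_{0}^{\infty}h(P(X>x))\,dx
\]
into the three cases $0\le a$, $a\le 0\le b$, and $b\le 0$. In each case the integrand is a step function on a bounded interval, and a short computation gives $\rho(X)=ca+(1-c)b=c\,\VaR_0(F)+(1-c)\,\VaR_1(F)$ in all three; the only bookkeeping point is that the contribution $\int(-1)\,dx$ from the region where $h\equiv0$ and $x<0$, together with the contributions from the region where $h\equiv1-c$, recombine to exactly the affine combination $ca+(1-c)b$. Finiteness of these integrals — i.e.\ membership of $F$ in $\PSet$ — is automatic when $c\in(0,1)$ provided $a,b$ are finite, and when $c\in\{0,1\}$ only one of $a,b$ needs to be finite.

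For the convex level sets, let $F_1,F_2\in\PSet$ with $\rho(F_1)=\rho(F_2)=t$, put $a_i:=\VaR_0(F_i)$, $b_i:=\VaR_1(F_i)$, and $F_\lambda:=\lambda F_1+(1-\lambda)F_2$ for $\lambda\in(0,1)$. The key observation is that $F_\lambda(x)>0$ iff $F_1(x)>0$ or $F_2(x)>0$, and $F_\lambda(x)=1$ iff $F_1(x)=F_2(x)=1$; taking infima gives $\VaR_0(F_\lambda)=\min(a_1,a_2)$ and $\VaR_1(F_\lambda)=\max(b_1,b_2)$, so in particular $F_\lambda\in\PSet$. By the first part, $\rho(F_\lambda)=c\min(a_1,a_2)+(1-c)\max(b_1,b_2)$. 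When $c\in(0,1)$, assume without loss of generality $a_1\le a_2$; then $\rho(F_1)=\rho(F_2)$ forces $(1-c)(b_1-b_2)=c(a_2-a_1)\ge0$, hence $b_1\ge b_2$, and therefore $\rho(F_\lambda)=ca_1+(1-c)b_1=\rho(F_1)=t$. The cases $c=0$ (where $\rho=\VaR_1$ and $b_1=b_2=t$) and $c=1$ (where $\rho=\VaR_0$ and $a_1=a_2=t$) follow at once from the $\max$/$\min$ identities.

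No step is genuinely deep here; the two things to be careful about are (i) the case split by the signs of $a$ and $b$ together with the fact that the endpoints $a,b$ do not contribute to the integral, and (ii) getting the $\min$ for $\VaR_0$ and the $\max$ for $\VaR_1$ of the mixture right — it is precisely this asymmetry, combined with the sign of $c(a_2-a_1)$ pinning down the sign of $b_1-b_2$, that makes the affine combination come out constant along a level set.
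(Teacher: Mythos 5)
Your proposal is correct and follows essentially the same route as the paper: a direct case-by-case evaluation of the step-function integrand in \eqref{equ:e_60} to get $\rho=c\VaR_0+(1-c)\VaR_1$, followed by the observation that $\rho(F_1)=\rho(F_2)$ together with a WLOG ordering of the $\VaR_0$'s forces the opposite ordering of the $\VaR_1$'s, so that the mixture's $\VaR_0$ and $\VaR_1$ (your $\min$/$\max$ identities, which the paper establishes in-line from the definitions) reproduce those of $F_1$. The only cosmetic difference is that you isolate the mixture identities $\VaR_0(F_\lambda)=\min(a_1,a_2)$, $\VaR_1(F_\lambda)=\max(b_1,b_2)$ as a standalone step before splitting on $c\in\{0\}$, $(0,1)$, $\{1\}$, which slightly streamlines the three cases the paper treats separately.
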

\begin{proof} If $\VaR_0(F)\geq 0$, then
\begin{align*}
  \rho(F)&=\int_{(0, \VaR_0(F))}h(1-F(x))\,dx+\int_{(\VaR_0(F), \VaR_1(F))}h(1-F(x))\,dx\\
  &\phantom{=}+\int_{(\VaR_1(F), \infty)}h(1-F(x))\,dx\\
  &=\VaR_0(F)+(1-c)(\VaR_1(F)-\VaR_0(F))=c\VaR_0(F)+(1-c)\VaR_1(F).
\end{align*}
If $\VaR_0(F)<0$, similar calculation also leads to $\rho(F)=c\VaR_0(F)+(1-c)\VaR_1(F)$.

Suppose $t=\rho(F_1)=\rho(F_2)$. Denote $F_{\lambda}:=\lambda F_1+(1-\lambda)F_2$,
$\lambda\in(0, 1)$. There are three cases:

(i) $c=0$. Then, $t=\VaR_1(F_1)=\VaR_1(F_2)$. By definition of $\VaR_1$, $F_i(x)<1$ for $x<t$ and $F_i(x)=1$ for $x\geq t$. Hence,
for any $\lambda\in(0, 1)$, it holds that $F_{\lambda}(x)<1$ for $x<t$ and $F_{\lambda}(x)=1$ for $x\geq t$. Hence, $\rho(F_{\lambda})=\VaR_1(F_{\lambda})=t$.

(ii) $c\in(0, 1)$. Without loss of generality, suppose $\VaR_0(F_1)\leq \VaR_0(F_2)$.
Since $t=c\VaR_0(F_1)+(1-c)\VaR_1(F_1)=c\VaR_0(F_2)+(1-c)\VaR_1(F_2)$, $\VaR_1(F_1)\geq \VaR_1(F_2)$. Hence, for any $\lambda\in(0, 1)$,
$\VaR_0(F_{\lambda})=\VaR_0(F_1)$ and $\VaR_1(F_{\lambda})=\VaR_1(F_1)$. Hence, $\rho(F_{\lambda})=t$.

(iii) $c=1$. Then, $t=\VaR_0(F_1)=\VaR_0(F_2)$. By definition of $\VaR_0$, $F_i(x)=0$ for $x<t$ and $F_i(x)>0$ for $x> t$. Hence,
for any $\lambda\in(0, 1)$, it holds that $F_{\lambda}(x)=0$ for $x<t$ and $F_{\lambda}(x)>0$ for $x> t$. Hence, $\rho(F_{\lambda})=\VaR_0(F_{\lambda})=t$.
\end{proof}

\begin{lemma}\label{lemma:lc_quantile}
  Let $\alpha\in(0, 1)$ and $c\in[0, 1]$.
Let $\rho$ be defined in \eqref{equ:e_60} with $h$ being defined as $h(x):=(1-c)\cdot 1_{\{x=1-\alpha\}}+1_{\{x>1-\alpha\}}$.
  Then
  \begin{equation}\label{equ:e_68}
  \rho(F)=cq_{\alpha}^-(F)+(1-c)q^+_{\alpha}(F),\ \forall F\in\PSet,
  \end{equation}
  where $q_{\alpha}^-(F):=\inf\{x\mid F(x)\geq \alpha\}$ and $q_{\alpha}^+(F):=\inf\{x\mid F(x)> \alpha\}$. Furthermore, $\rho$ has convex level sets with respect to $\PSet=\{F_X\mid X\ \text{is a proper random variable}\}$.
\end{lemma}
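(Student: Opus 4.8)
The plan is to prove the identity \eqref{equ:e_68} first and then to deduce the convex level set property from it; throughout I take $s=1$ in \eqref{equ:e_60}, as in Examples~\ref{ex:var}--\ref{ex:ms}. For the representation I would start from the integrand $h(P(X>x))=h(1-F(x))$ in \eqref{equ:e_60}. Since $h(u)=(1-c)\,1_{\{u=1-\alpha\}}+1_{\{u>1-\alpha\}}$, this equals $1$, $1-c$, or $0$ according as $F(x)<\alpha$, $F(x)=\alpha$, or $F(x)>\alpha$. Monotonicity and right-continuity of $F$, together with the definitions of $q_{\alpha}^-=\inf\{x\mid F(x)\ge\alpha\}$ and $q_{\alpha}^+=\inf\{x\mid F(x)>\alpha\}$, show that these three regions are, up to the boundary points $q_{\alpha}^\pm$ (which have Lebesgue measure zero and so do not affect the integral), exactly $(-\infty,q_{\alpha}^-)$, $[q_{\alpha}^-,q_{\alpha}^+)$, and $(q_{\alpha}^+,\infty)$. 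Substituting this three-level step function into \eqref{equ:e_60} and evaluating $\int_{-\infty}^0(h(P(X>x))-1)\,dx$ and $\int_0^\infty h(P(X>x))\,dx$ --- splitting into the three cases $0\le q_{\alpha}^-\le q_{\alpha}^+$, $q_{\alpha}^-\le 0\le q_{\alpha}^+$, and $q_{\alpha}^-\le q_{\alpha}^+\le 0$ to keep track of signs --- yields $\rho(F)=q_{\alpha}^-+(1-c)(q_{\alpha}^+-q_{\alpha}^-)=cq_{\alpha}^-(F)+(1-c)q_{\alpha}^+(F)$ in every case. Both quantiles are finite because $X$ is a proper random variable and $\alpha\in(0,1)$, so $\rho$ is a well-defined real-valued functional on $\PSet$.

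For the convex level set property, take $F_1,F_2\in\PSet$ with $t:=\rho(F_1)=\rho(F_2)$, fix $\lambda\in(0,1)$, and set $F_\lambda:=\lambda F_1+(1-\lambda)F_2$, which is again the distribution of a proper random variable, so \eqref{equ:e_68} applies to it. If $c=1$ then $\rho=q_{\alpha}^-$, and $q_{\alpha}^-(F_1)=q_{\alpha}^-(F_2)=t$ says $F_i(x)<\alpha$ for $x<t$ and $F_i(x)\ge\alpha$ for $x\ge t$, $i=1,2$; both properties pass to the convex combination $F_\lambda$, so $q_{\alpha}^-(F_\lambda)=t$ and hence $\rho(F_\lambda)=t$. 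The case $c=0$, $\rho=q_{\alpha}^+$, is analogous, using $F_i(x)\le\alpha$ for $x<t$ and $F_i(x)>\alpha$ for $x>t$. For $c\in(0,1)$ I would normalize so that $q_{\alpha}^-(F_1)\le q_{\alpha}^-(F_2)$; then the equality $cq_{\alpha}^-(F_1)+(1-c)q_{\alpha}^+(F_1)=cq_{\alpha}^-(F_2)+(1-c)q_{\alpha}^+(F_2)$ forces $q_{\alpha}^+(F_1)\ge q_{\alpha}^+(F_2)$, giving the chain $q_{\alpha}^-(F_1)\le q_{\alpha}^-(F_2)\le q_{\alpha}^+(F_2)\le q_{\alpha}^+(F_1)$. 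Since $F_i\equiv\alpha$ on $[q_{\alpha}^-(F_i),q_{\alpha}^+(F_i))$, both $F_1$ and $F_2$, hence $F_\lambda$, equal $\alpha$ on the inner interval $[q_{\alpha}^-(F_2),q_{\alpha}^+(F_2))$; just below $q_{\alpha}^-(F_2)$ one has $F_2<\alpha$ and $F_1\le\alpha$, so $F_\lambda<\alpha$; and just above $q_{\alpha}^+(F_2)$ one has $F_2>\alpha$ and $F_1\ge\alpha$, so $F_\lambda>\alpha$. This pins down $q_{\alpha}^-(F_\lambda)=q_{\alpha}^-(F_2)$ and $q_{\alpha}^+(F_\lambda)=q_{\alpha}^+(F_2)$ (a short check handles the degenerate subcase $q_{\alpha}^-(F_2)=q_{\alpha}^+(F_2)$), so $\rho(F_\lambda)=cq_{\alpha}^-(F_2)+(1-c)q_{\alpha}^+(F_2)=t$.

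The routine part is the sign bookkeeping in the integral computation of the first step. The step that requires care is the $c\in(0,1)$ case of convex level sets: deriving the full ordering $q_{\alpha}^-(F_1)\le q_{\alpha}^-(F_2)\le q_{\alpha}^+(F_2)\le q_{\alpha}^+(F_1)$ from the single scalar equality $\rho(F_1)=\rho(F_2)$, and then verifying that $F_\lambda$ is strictly below $\alpha$, equal to $\alpha$, and strictly above $\alpha$ on the three relevant regions --- which relies on the exact ``flat at level $\alpha$'' description of each $F_i$ between its left and right $\alpha$-quantiles and on the strictness $\lambda\in(0,1)$.
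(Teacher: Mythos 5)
Your proposal is correct and takes essentially the same route as the paper's own proof: the identity \eqref{equ:e_68} is obtained by recognizing the integrand as the three-level step function ($1$, $1-c$, $0$ on the regions where $F<\alpha$, $F=\alpha$, $F>\alpha$, i.e.\ on $(-\infty,q_{\alpha}^-)$, $[q_{\alpha}^-,q_{\alpha}^+)$, $(q_{\alpha}^+,\infty)$ up to a null set) and splitting into the same three sign cases, and the convex-level-set property is proved by the same case analysis on $c$, with the $c\in(0,1)$ case resting on the identical interval-nesting argument $q_{\alpha}^-(F_1)\le q_{\alpha}^-(F_2)\le q_{\alpha}^+(F_2)\le q_{\alpha}^+(F_1)$ and the flatness of each $F_i$ at level $\alpha$ between its left and right quantiles. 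The only differences are cosmetic (the paper substitutes $g(u)=1-h(1-u)$ before integrating, and uses the opposite labeling of which distribution has the inner quantile interval).
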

\begin{proof} 
%
Define $g(x):=1-h(1-x)$, $x\in[0, 1]$. Then, $g(x)=c\cdot 1_{\{x=\alpha\}}+1_{\{x>\alpha\}}$,
and $\rho$ can be represented as
$$\rho(F)=-\int^0_{-\infty}g(F(x))dx+\int_0^{\infty}(1-g(F(x)))dx.$$
Note that $F(x)=\alpha$ for $x\in[q_{\alpha}^-(F), q_{\alpha}^+(F))$. Consider three cases:

(i) $q_{\alpha}^-(F)\geq 0$. In this case,
\begin{align*}
  \rho(F)&=\int_0^{\infty}(1-g(F(x)))dx\\
  &=\int_{[0, q^-_{\alpha}(F))}(1-g(F(x)))dx+\int_{[q^-_{\alpha}(F), q^+_{\alpha}(F))}(1-g(F(x)))dx+\int_{(q^+_{\alpha}(F), \infty)}(1-g(F(x)))dx\\
  &=q^-_{\alpha}(F)+(1-c)(q^+_{\alpha}(F)-q^-_{\alpha}(F))=cq^-_{\alpha}(F)+(1-c)q^+_{\alpha}(F).
\end{align*}

(ii) $q_{\alpha}^-(F)< 0 < q_{\alpha}^+(F)$. In this case,
\begin{align*}
  \rho(F)&=-\int_{(q_{\alpha}^-(F), 0)}g(F(x))dx+\int_{(0, q^+_{\alpha}(F))}(1-g(F(x)))dx=cq^-_{\alpha}(F)+(1-c)q^+_{\alpha}(F).
\end{align*}

(iii) $q_{\alpha}^+(F)\leq 0$. In this case,
\begin{align*}
  \rho(F)&=-\int_{(-\infty, q^-_{\alpha}(F))}g(F(x))dx-\int_{(q^-_{\alpha}(F), q^+_{\alpha}(F))}g(F(x))dx-\int_{(q^+_{\alpha}(F), 0)}g(F(x))dx\\
  &=-c(q^+_{\alpha}(F)-q^-_{\alpha}(F))+q^+_{\alpha}(F)=cq^-_{\alpha}(F)+(1-c)q^+_{\alpha}(F),
\end{align*}
which completes the proof of \eqref{equ:e_68}.

We then show that $\rho$ has convex level sets with respect to $\PSet$. Suppose that $\rho(F_1)=\rho(F_2)$. Then
\begin{equation}\label{equ:e_a63}
cq_{\alpha}^-(F_1)+(1-c)q^+_{\alpha}(F_1)=cq_{\alpha}^-(F_2)+(1-c)q^+_{\alpha}(F_2).
\end{equation}
For $\lambda\in(0, 1)$, define $F_{\lambda}:=\lambda F_1+(1-\lambda)F_2$. There are three cases:

(i) $c=0$.
Then, $\rho=q_{\alpha}^+$. Denote  $t=q_{\alpha}^+(F_1)=q_{\alpha}^+(F_2)$, then $F_i(x)>\alpha$ for $x>t$ and $F_i(x)\leq\alpha$ for $x<t$, $i=1, 2$. Hence, $F_{\lambda}(x)>\alpha$ for $x>t$ and $F_{\lambda}(x)\leq\alpha$ for $x<t$, which implies $t=q^+_{\alpha}(F_{\lambda})$, i.e., $q^+_{\alpha}$ has convex level sets with respect to $\PSet$.

(ii) $c\in(0, 1)$. Without loss of generality, assume $q_{\alpha}^-(F_1)\geq q_{\alpha}^-(F_2)$. Then it follows from \eqref{equ:e_a63} that $q_{\alpha}^+(F_1)\leq q_{\alpha}^+(F_2)$. Therefore,
$[q_{\alpha}^-(F_1), q_{\alpha}^+(F_1)]\subset [q_{\alpha}^-(F_2), q_{\alpha}^+(F_2)]$. There are two subcases: (ii.i) $q_{\alpha}^-(F_1)<q_{\alpha}^+(F_1)$. In this case, $F_{\lambda}(x)<\alpha$ for $x<q_{\alpha}^-(F_1)$; $F_{\lambda}(x)=\alpha$ for $x\in[q_{\alpha}^-(F_1), q_{\alpha}^+(F_1))$; and
$F_{\lambda}(x)>\alpha$ for $x>q_{\alpha}^+(F_1)$. Therefore, $q_{\alpha}^-(F_{\lambda})=q_{\alpha}^-(F_1)$ and
$q_{\alpha}^+(F_{\lambda})=q_{\alpha}^+(F_1)$, which implies that $\rho(F_{\lambda})=\rho(F_1)$. (ii.ii)  $q_{\alpha}^-(F_1)=q_{\alpha}^+(F_1)$. In this case, $F_{\lambda}(x)<\alpha$ for $x<q_{\alpha}^-(F_1)$ and $F_{\lambda}(x)>\alpha$ for $x>q_{\alpha}^+(F_1)$. Therefore, $q_{\alpha}^-(F_{\lambda})=q_{\alpha}^-(F_1)$ and
$q_{\alpha}^+(F_{\lambda})=q_{\alpha}^+(F_1)$, which implies that $\rho(F_{\lambda})=\rho(F_1)$. Therefore, $\rho$ has convex level sets.

(iii) $c=1$. Then, $\rho=q_{\alpha}^-=\VaR_{\alpha}$. Denote  $t=q_{\alpha}^-(F_1)=q_{\alpha}^-(F_2)$, then
$F_i(x)<\alpha$ for $x<t$ and $F_i(x)\geq \alpha$ for $x\geq t$, $i=1, 2$. Hence, $F_{\lambda}(x)<\alpha$ for $x<t$ and $F_{\lambda}(x)\geq \alpha$ for $x\geq t$, which implies that $q^-_{\alpha}(F_{\lambda})=t$, i.e.,
$q^-_{\alpha}$ has convex level sets with respect to $\PSet$.
\end{proof}

Next, we prove the following Theorem \ref{thm:convex_level_set_rm}, which shows that among the class of risk measures based on Choquet expected utility theory, only four kinds of risk measures satisfy the necessary condition of being elicitable.

\begin{theorem}\label{thm:convex_level_set_rm}
Let $\PSet_0$  be the set of distributions with finite support. Let $h$ be a distortion function defined on $[0, 1]$ and let $\rho(\cdot)$ be defined as in \eqref{equ:e_60}. Then, $\rho(\cdot)$ has convex level sets with respect to $\PSet_0$ if and only if one of the following four cases holds:
  \begin{itemize}
    \item[(i)] There exists $c\in[0, 1]$, such that $\rho=c\VaR_0+(1-c)\VaR_1$, where $\VaR_0(F):=\inf\{x\mid F(x)>0\}$ and $\VaR_1(F):=\inf\{x\mid F(x)=1\}$.

    \item[(ii)]
    There exists $\alpha\in(0, 1)$ such that $\rho(F)=\VaR_{\alpha}(F)$, $\forall F$.

    \item[(iii)]
    There exists $\alpha\in(0, 1)$ and $c\in[0, 1)$ such that
    \begin{equation}\label{equ:e_74}
    \rho(F)=cq_{\alpha}^-(F)+(1-c)q_{\alpha}^+(F),\ \forall F,
    \end{equation}
    where $q_{\alpha}^-(F):=\inf\{x\mid F(x)\geq \alpha\}$ and $q^+_{\alpha}(F):=\inf\{x\mid F(x)>\alpha\}$.
    \item[(iv)]
    $\rho(F)=\int xdF(x)$, $\forall F$.
  \end{itemize}
Furthermore, the risk measures listed above have convex level sets with respect to $\PSet$ defined in Theorem \ref{thm:elicitable_measures}.
\end{theorem}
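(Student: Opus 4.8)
The plan is to prove the two implications separately; the ``if'' direction and the ``furthermore'' clause are short, and essentially all the work is in the ``only if'' direction. For ``if'' one exhibits an admissible distortion function for each family and quotes the lemmas already established: for (i) take $h\equiv 1-c$ on $(0,1)$ with $h(0)=0,h(1)=1$, so that $\rho=c\VaR_0+(1-c)\VaR_1$ and Lemma~\ref{lemma:inf_c_sup} applies; for (ii) and (iii) take $h(x)=(1-c)1_{\{x=1-\alpha\}}+1_{\{x>1-\alpha\}}$ with $c=1$ for (ii) and $c\in[0,1)$ for (iii), and apply Lemma~\ref{lemma:lc_quantile}; for (iv) observe that $F\mapsto\int x\,dF(x)$ is affine in $F$, so $\rho(\lambda F_1+(1-\lambda)F_2)=\lambda\rho(F_1)+(1-\lambda)\rho(F_2)$, which equals $\rho(F_1)$ whenever $\rho(F_1)=\rho(F_2)$. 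Since Lemmas~\ref{lemma:inf_c_sup} and~\ref{lemma:lc_quantile} are stated on classes at least as large as $\PSet$, and affinity of the mean holds on any class on which it is defined, this also yields the ``furthermore'' clause.

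For ``only if'', assume without loss of generality $s=1$ and that $\rho$ defined by \eqref{equ:e_60} has convex level sets on $\PSet_0$. I would first record the closed form of $\rho$ on a finitely supported distribution, obtained by a direct computation of the integral in \eqref{equ:e_60}: if $X$ takes values $x_1<\dots<x_n$ with probabilities $p_1,\dots,p_n$, then $\rho(X)=x_1+\sum_{i=2}^n(x_i-x_{i-1})\,h\!\bigl(\sum_{j\ge i}p_j\bigr)$. The idea is then to feed pairs of two-point distributions $F_1,F_2$ with $\rho(F_1)=\rho(F_2)$ into the convex-level-set property: the mixture $F_\lambda=\lambda F_1+(1-\lambda)F_2$ is generically three-point, so $\rho(F_\lambda)=\rho(F_1)$ for all $\lambda\in(0,1)$ becomes a one-parameter family of functional equations for $h$. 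Taking, for instance, $F_1$ with mass $1-p$ at $0$ and $p$ at $1$, and $F_2$ with mass $1-q$ at $0$ and $q$ at $a:=h(p)/h(q)$ (so $\rho(F_1)=\rho(F_2)=h(p)$), and keeping track of the ordering of $a$ relative to $1$, produces the relation $h(p)\bigl[h(\lambda p+(1-\lambda)q)-h(q)\bigr]=\bigl(h(p)-h(q)\bigr)h(\lambda p)$; analogous relations come from varying the supports, interchanging the roles of $p$ and $q$, and placing mass at negative values.

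The heart of the proof is the case analysis deducing the four families from these relations. I expect the dichotomy to be the following. (a) If $h$ is not locally constant at some interior point of $(0,1)$, then the functional equations force $h$ to be affine on a subinterval and, by propagating the relation across $(0,1)$, affine on all of $[0,1]$; since $h(0)=0$ and $h(1)=1$ this gives $h(x)=x$, i.e.\ case (iv) --- and this is precisely the step that excludes ``ramp-then-flat'' distortions such as the one defining $\ES_\alpha$, whose affine piece has the wrong slope. (b) Otherwise $h$ is a step function on $(0,1)$, and one counts interior jumps: no interior jump forces $h\equiv 1-c$ and case (i); a single interior jump is necessarily from $0$ to $1$ at some point $1-\alpha\in(0,1)$ with an intermediate value $h(1-\alpha)=1-c\in[0,1]$, giving case (ii) when $c=1$ and case (iii) when $c<1$; two or more interior jumps are excluded by exhibiting a three-point distribution whose mixture with a suitable two-point distribution violates convexity of the level set.

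The main obstacle is making this case analysis exhaustive. Because Lemma~\ref{lemma:risk_measure_Schmeidler} imposes no left- or right-continuity on $h$, the left and right limits of $h$ must be carried through every computation --- this is exactly the source of the distinction between $q_\alpha^-$ and $q_\alpha^+$ in (ii)/(iii) --- and the hybrid possibilities (a jump together with a nonconstant piece, or several plateaus and jumps interleaved) have to be eliminated individually with explicit discrete distributions, paying careful attention to the strict versus non-strict inequalities in the definitions of $q_\alpha^{\pm}$. Handling these degenerate and mixed configurations, rather than the smooth generic case, is where the delicacy lies; once the surviving forms of $h$ have been identified, matching them to the four displayed functionals and re-deriving convexity of their level sets on the full class $\PSet$ via Lemmas~\ref{lemma:inf_c_sup} and~\ref{lemma:lc_quantile} is routine.
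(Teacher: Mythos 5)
Your overall strategy coincides with the paper's: reduce to the explicit formula for $\rho$ on finitely supported laws, feed mixtures of two-point distributions into the convex-level-set property to obtain functional equations for the distortion (the relation $h(p)\bigl[h(\lambda p+(1-\lambda)q)-h(q)\bigr]=\bigl(h(p)-h(q)\bigr)h(\lambda p)$ you derive is of the same type as the paper's equation \eqref{equ:e_a4} and its descendants \eqref{equ:e_a14} and \eqref{equ:e_a43}), and then classify the solutions. The ``if'' direction and the ``furthermore'' clause are handled exactly as in the paper, via Lemma \ref{lemma:inf_c_sup}, Lemma \ref{lemma:lc_quantile}, and affinity of the mean.

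The gap is that the classification step --- which is essentially the entire content of the theorem --- is left as a plan, and the dichotomy you propose to organize it is not exhaustive. ``$h$ is locally constant at every interior point'' forces $h$ to be constant on the connected set $(0,1)$, so the literal complement of your case (a) does not produce the step functions of your case (b); and if (b) is instead meant to be ``$h$ is a step function,'' then case (a) must absorb every monotone $h$ that is not one, including distortions with infinitely many (possibly dense) jumps or strictly increasing singular parts, for which ``affine on a subinterval, then propagate'' is not an argument. The paper's actual route in the nondegenerate case does not go through local affineness at all: setting $g(u)=1-h(1-u)$, $\alpha=\sup\{q\mid g(q)=0\}$ and $\beta=\inf\{q\mid g(q)=1\}$, it eliminates the mixed configurations by one-sided limit arguments inside the functional equation, and in the surviving strictly increasing case proves $g(0+)=0$, $g(1-)=1$, continuity on $(0,1)$, the reflection identity $g(1-v)=1-g(v)$ and multiplicativity $g(vp)=g(v)g(p)$, and only then concludes $g(u)=u$ by dyadic induction --- several pages of work with left and right limits that your sketch correctly flags as ``where the delicacy lies'' but does not carry out. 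As it stands the proposal identifies the right equations and the right difficulty but does not prove the theorem.
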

\begin{proof}[Proof of Theorem \ref{thm:convex_level_set_rm}.]
Define $g(u):=1-h(1-u)$, $u\in[0, 1]$. Then $g(0)=0$, $g(1)=1$, and $g$ is increasing on $[0, 1]$. And then, $\rho$ can be represented as
$$\rho(F)=-\int^0_{-\infty}g(F(x))dx+\int_0^{\infty}(1-g(F(x)))dx.$$
For a discrete distribution $F=\sum_{i=1}^n p_i\delta_{x_i}$, where $0\leq x_1<x_2<\cdots<x_n$, $p_i>0$, $i=1, \ldots, n$, and $\sum_{i=1}^n p_i=1$, it can be shown by simple calculation that $\rho(F)=g(p_1)x_1+\sum_{i=2}^n (g(\sum_{j=1}^i p_j)-g(\sum_{j=1}^{i-1} p_j))x_i$.

There are three cases for $g$:

Case (i): for any $q\in(0, 1)$, $g(q)=0$. Then $g(u)=1_{\{u=1\}}$. By Lemma \ref{lemma:inf_c_sup} (with $c=0$), $\rho=\VaR_1$ and $\rho$ has convex level sets  with respect to $\PSet$.

Case (ii): there exists $q_0\in(0, 1)$ such that $g(q_0)=1$ and $g(q)\in\{0, 1\}$ for all $q\in(0, 1)$.
Let $\alpha=\inf\{q\mid g(q)=1\}$. There are three subcases: (ii.i) $\alpha=0$. Then, $g(u)=1_{\{u>0\}}$. By Lemma \ref{lemma:inf_c_sup} (with $c=1$), $\rho=\VaR_0$ and $\rho$ has convex level sets  with respect to $\PSet$.
(ii.ii) $\alpha\in(0, 1)$ and $g(\alpha)=1$. Then, $g(u)=1_{\{u\geq \alpha\}}$. By Lemma \ref{lemma:lc_quantile} (with $c=1$), $\rho=q^-_{\alpha}=\VaR_{\alpha}$ and $\rho$ has convex level sets with respect to $\PSet$.
(ii.iii) $\alpha\in(0, 1)$ and $g(\alpha)=0$. Then, $g(u)=1_{\{u>\alpha\}}$.
By Lemma \ref{lemma:lc_quantile} (with $c=0$),
$\rho=q_{\alpha}^+$ and $\rho$ has convex level sets with respect to $\PSet$.

Case (iii): there exists $q\in(0, 1)$ such that $g(q)\in (0, 1)$. Suppose $\rho$ has convex level sets with respect to $\PSet_0$. For any
$0< x_1< x_2$ and $q\in(0, 1)$ that satisfy
  \begin{equation}\label{equ:e_a2}
  1=\rho(\delta_1)=\rho(q \delta_{x_1}+(1-q)\delta_{x_2})=x_1 g(q)+x_2 (1-g(q)),
  \end{equation}
since $\rho$ has convex level sets, it follows that
  \begin{equation}\label{equ:e_a3}
  1=\rho(v(q \delta_{x_1}+(1-q)\delta_{x_2})+(1-v)\delta_1),\ \forall v\in(0, 1).
  \end{equation}
For any $q\in(0, 1)$ such that $g(q)\in (0, 1)$, \eqref{equ:e_a2} holds for any $(x_1, x_2)=(1-c, -\frac{g(q)}{1-g(q)}(1-c)+\frac{1}{1-g(q)})$, $\forall c\in(0, 1)$. Noting that $x_1<1<x_2$, \eqref{equ:e_a3} implies
  \begin{align*}
  1={}&\rho(v(q \delta_{x_1}+(1-q)\delta_{x_2})+(1-v)\delta_1)\\
  ={}& x_1 g(vq)+g(vq + 1 -v)-g(vq)+x_2(1-g(vq+1-v))\\
  ={}& (1-c) g(vq)+g(vq + 1 -v)-g(vq)\\
  & +\left[-\frac{g(q)}{1-g(q)}(1-c)+\frac{1}{1-g(q)}\right](1-g(vq+1-v))\\
  ={}& 1+ c\left[-g(vq)+\frac{g(q)}{1-g(q)}(1-g(vq+1-v))\right],\ \forall v\in(0, 1), \forall c\in(0, 1).
  \end{align*}
  Therefore,
  \begin{equation}\label{equ:e_a4}
  -g(vq)+\frac{g(q)}{1-g(q)}(1-g(vq+1-v))=0,\ \forall v\in(0, 1),\forall q\ \text{such that}\ g(q)\in(0, 1).
  \end{equation}
  Let $\alpha=\sup\{q\mid g(q)=0, q\in[0, 1]\}$ and $\beta=\inf\{q\mid g(q)=1, q\in[0, 1]\}$. Since there exists $q_0\in(0,1)$ such that $g(q_0)\in(0, 1)$, it follows that $\alpha\leq q_0<1$, $g(\alpha)\leq g(q_0)<1$, $\beta\geq q_0>0$, and $g(\beta)\geq g(q_0)>0$.

  There are four subcases:

  Case (iii.i) $\alpha=\beta$ and $g(\alpha)=c\in(0, 1)$. In this case, $\alpha=\beta\in(0, 1)$. By the definition of $\alpha$ and $\beta$, $g(x)=0$ for $x<\alpha$ and $g(x)=1$ for $x>\alpha$. By Lemma \ref{lemma:lc_quantile}, $\rho=cq_{\alpha}^-+(1-c)q_{\alpha}^+$ and $\rho$ has convex level sets with respect to $\PSet$.

  Case (iii.ii) $\alpha<\beta$ and $g(\alpha)\in(0, 1)$. In this case, $\alpha\in(0, 1)$. It follows from the definition of $\beta$ that $g((\alpha+\beta)/2)<1$. Let $\epsilon_0=\beta-\alpha$. By the definition of $\beta$, $g(\alpha+\epsilon)<1$ for all $\epsilon\in (0, \epsilon_0)$. In addition, $g(\alpha+\epsilon)\geq g(\alpha)>0$ for all $\epsilon\in (0, \epsilon_0)$. Hence, $g(\alpha+\epsilon)\in(0, 1)$ for all $\epsilon\in (0, \epsilon_0)$.
  For any $\eta\in(0, \alpha)$ and $\epsilon\in(0, \epsilon_0)$, let $q=\alpha + \epsilon$
   and $v=\frac{\alpha-\eta}{\alpha+\epsilon}$. Then it follows from the definition of $\alpha$ that $g(vq)=g(\alpha-\eta)=0$, which implies from \eqref{equ:e_a4} that $1=g(vq+1-v)=g(\alpha-\eta+\frac{\epsilon+\eta}{\alpha+\epsilon})$, for any $\epsilon\in(0, \epsilon_0), \eta\in(0, \alpha)$.
  Then, $g(\alpha+)=\lim_{\epsilon\downarrow 0, \eta\downarrow 0}g(\alpha-\eta+\frac{\epsilon+\eta}{\alpha+\epsilon})=1$, which contradicts to $g(\alpha+)\leq g((\alpha+\beta)/2)<1$. Therefore, this case does not hold.

  Case (iii.iii) $\alpha<\beta$, $g(\alpha)=0$, and $g(\beta)\in(0, 1)$. Since $g(\beta)\in(0, 1)$, it follows that $\beta\in(0, 1)$.
  By the definition of $\beta$,
  for any $\eta\in(0, 1-\beta)$, $g(\beta+\eta)=1$.
  By the definition of $\alpha$, $g((\beta+\alpha)/2)>0$. Hence, $g(\beta-)\geq g((\beta+\alpha)/2)>0$. Hence,
  there exists $\epsilon_0>0$ such that $g(\beta-\epsilon)>0$ for any $\epsilon\in(0, \epsilon_0)$. On the other hand, $g(\beta-\epsilon)\leq g(\beta)<1$ for any $\epsilon\in(0, \epsilon_0)$. Hence, $g(\beta-\epsilon)\in(0, 1)$ for any $\epsilon\in(0, \epsilon_0)$.
  Then, for any $\eta\in(0, 1-\beta)$ and $\epsilon\in(0, \epsilon_0)$, let $q=\beta-\epsilon$ and $v=\frac{1-\beta-\eta}{1-\beta+\epsilon}$. Then, we have $g(vq+1-v)=g(\beta+\eta)=1$.
  Since $g(\beta-\epsilon)\in(0, 1)$ for $\epsilon\in(0, \epsilon_0)$, it follows from \eqref{equ:e_a4} that $0=g(vq)=g(\frac{1-\beta-\eta}{1-\beta+\epsilon}(\beta-\epsilon))$, which implies that $g(\beta-)=\lim_{\eta\downarrow 0,\epsilon\downarrow 0}g(\frac{1-\beta-\eta}{1-\beta+\epsilon}(\beta-\epsilon))=0$. This contradicts to $g(\beta-)>0$. Therefore, this case does not hold.

  Case (iii.iv) $\alpha<\beta$, $g(\alpha)=0$, $g(\beta)=1$. Let $q_0\in(0, 1)$ such that $g(q_0)\in(0, 1)$. Then, $\alpha<q_0<\beta$. We will show that either there exists a constant $c\in(0, 1)$ such that $g(u)=c$, $\forall u\in(0, 1)$, or $g(u)=u$, $\forall u\in(0, 1)$.

  First, we will show that $\alpha=0$ and $\beta=1$. Suppose for the sake of contradiction that $\alpha>0$. Since $\alpha < q_0$, it follows that $g(\alpha+\epsilon)<1$ for all $\epsilon\in(0, \epsilon_0)$, where $\epsilon_0=q_0-\alpha$. Furthermore, by the definition of $\alpha$, $g(\alpha+\epsilon)>0$ for all $\epsilon\in(0, \epsilon_0)$. Hence, $g(\alpha+\epsilon)\in (0, 1)$ for all $\epsilon\in(0, \epsilon_0)$.
  For any $\eta\in(0, \alpha)$ and $\epsilon\in(0, \epsilon_0)$, let $q=\alpha + \epsilon$
   and $v=\frac{\alpha-\eta}{\alpha+\epsilon}$. Then it follows from the definition of $\alpha$ that $g(vq)=g(\alpha-\eta)=0$, which implies from \eqref{equ:e_a4} that $1=g(vq+1-v)=g(\alpha-\eta+\frac{\epsilon+\eta}{\alpha+\epsilon})$, for any $\epsilon\in(0, \epsilon_0), \eta\in(0, \alpha)$.
  Then, $g(\alpha+)=\lim_{\epsilon\downarrow 0, \eta\downarrow 0}g(\alpha-\eta+\frac{\epsilon+\eta}{\alpha+\epsilon})=1$, which contradicts to $g(\alpha+)\leq g(q_0)<1$.
  Therefore, $\alpha=0$.

  In addition, suppose for the sake of contradiction that $\beta<1$. Then, by the definition of $\beta$,
  for any $\eta\in(0, 1-\beta)$, $g(\beta+\eta)=1$.
  Let $\epsilon_0=\beta-q_0$. Since $\beta>q_0$, $g(\beta-\epsilon)\geq g(q_0)>0$ for any $\epsilon\in(0, \epsilon_0)$. By the definition of $\beta$, $g(\beta-\epsilon)<1$ for any $\epsilon\in(0, \epsilon_0)$. Hence, $g(\beta-\epsilon)\in(0, 1)$ for any $\epsilon\in(0, \epsilon_0)$.
  Then, for any $\eta\in(0, 1-\beta)$ and $\epsilon\in(0, \epsilon_0)$, let $q=\beta-\epsilon$ and $v=\frac{1-\beta-\eta}{1-\beta+\epsilon}$. Then, we have $g(vq+1-v)=g(\beta+\eta)=1$.
  Since $g(\beta-\epsilon)\in(0, 1)$ for any $\epsilon\in(0, \epsilon_0)$, it follows from \eqref{equ:e_a4} that $0=g(vq)=g(\frac{1-\beta-\eta}{1-\beta+\epsilon}(\beta-\epsilon))$, which implies that $g(\beta-)=\lim_{\eta\downarrow 0,\epsilon\downarrow 0}g(\frac{1-\beta-\eta}{1-\beta+\epsilon}(\beta-\epsilon))=0$. This contradicts to that $g(\beta-)\geq g(q_0)>0$. Therefore, $\beta=1$.

  Then, it follows from $\alpha=0$ and $
  \beta=1$ that
  \begin{equation}\label{equ:e_a5}
  g(q)\in(0, 1),\ \forall q\in(0, 1).
  \end{equation}
  Therefore, it follows from \eqref{equ:e_a4} and \eqref{equ:e_a5} that
  \begin{equation}\label{equ:e_a6}
  -g(vq)+\frac{g(q)}{1-g(q)}(1-g(vq+1-v))=0,\ \forall v\in(0, 1),\forall q\in(0, 1).
  \end{equation}
For any $q\in(0, 1)$ and $v\in(0, 1)$, $vq+1-v>q$ and $\lim_{v\uparrow 1}(vq+1-v)=q$. It then follows from \eqref{equ:e_a6} that
  \begin{align}
    g(q-)&=\lim_{v\uparrow 1}g(vq)=\lim_{v\uparrow 1}\frac{g(q)}{1-g(q)}(1-g(vq+1-v))= \frac{g(q)}{1-g(q)}(1-g(q+)),\ \forall q\in (0, 1).\label{equ:e_63}
  \end{align}

Second, we consider two cases for $g$:

Case (iii.iv.i) There exist $0<u_1<u_2<1$ such that $g(u_1)=g(u_2)$. Let $w_1=\inf\{u\mid g(u)=g(u_1)\}$ and $w_2=\sup\{u\mid g(u)=g(u_2)\}$. Consider three further cases: (a) $w_1>0$. Since $\lim_{q\downarrow w_1}\frac{1-u_2}{1-q}=\frac{1-u_2}{1-w_1}<1=\lim_{q\downarrow w_1}\frac{w_1}{q}$,
there exists $q_0\in(w_1, u_2)$ such that
$\frac{1-u_2}{1-q_0}<\frac{w_1}{q_0}$. Choose $v_0\in(0, 1)$ such that $\frac{1-u_2}{1-q_0}<v_0<\frac{w_1}{q_0}$. Since $v_0q_0<w_1$, $g(v_0q_0)<g(u_1)$. And, since $w_1<q_0<v_0q_0+1-v_0<u_2$, $g(q_0)=g(v_0q_0+1-v_0)=g(u_1)$. Therefore,
$
  -g(v_0q_0)+\frac{g(q_0)}{1-g(q_0)}(1-g(v_0q_0+1-v_0))>0
$,
which contradicts to \eqref{equ:e_a6}. Hence, this case cannot hold. (b) $w_2<1$. Since $\lim_{q\uparrow w_2}\frac{1-w_2}{1-q}=1>\frac{u_1}{w_2}=\lim_{q\uparrow w_2}\frac{u_1}{q}$,
there exists $q_0\in(u_1, w_2)$ such that
$\frac{1-w_2}{1-q_0}>\frac{u_1}{q_0}$. Choose $v_0\in(0, 1)$ such that $\frac{1-w_2}{1-q_0}>v_0>\frac{u_1}{q_0}$. Since $w_2>q_0>v_0q_0>u_1$, $g(q_0)=g(v_0q_0)=g(u_1)$. And, since $v_0q_0+1-v_0>w_2$, $g(v_0q_0+1-v_0)>g(u_1)$. Therefore,
$
  -g(v_0q_0)+\frac{g(q_0)}{1-g(q_0)}(1-g(v_0q_0+1-v_0))<0
$,
which contradicts to \eqref{equ:e_a6}. Hence, this case cannot hold. (c) $w_1=0$ and $w_2=1$. In this case, $g(u)=c, \forall u\in(0, 1)$, for some constant $c\in(0, 1)$. By Lemma \ref{lemma:inf_c_sup}, $\rho=c\VaR_0+(1-c)\VaR_1$, and $\rho$ has convex level sets with respect to $\PSet$.

Case (iii.iv.ii) $g$ is strictly increasing on $(0, 1)$. Then, $g(p_1)-g(p_2)\neq 0$ for any $p_1\neq p_2$. We will show that $g(1-)=1$ and $g(0+)=0$. Consider $0<x_1<x_2<x_3$ and $p_1, p_2\in (0, 1)$ such that
\begin{equation*}
\rho(p_1\delta_{x_1}+(1-p_1)\delta_{x_2})=\rho(p_2\delta_{x_1}+(1-p_2)\delta_{x_3}),
\end{equation*}
which is equivalent to
\begin{equation}\label{equ:e_a11}
x_1g(p_1)+x_2(1-g(p_1))=x_1 g(p_2)+(1-g(p_2))x_3.
\end{equation}
Let $\frac{x_1}{x_2}=c_1$ and $\frac{x_3}{x_2}=c_3$. Then, $c_1\in(0, 1)$, $c_3>1$, and \eqref{equ:e_a11} is equivalent to
\begin{equation}\label{equ:e_a12}
c_1=\frac{1-g(p_2)}{g(p_1)-g(p_2)}c_3-\frac{1-g(p_1)}{g(p_1)-g(p_2)}.
\end{equation}
For any fixed $0<p_1<p_2<1$ and $1<c_3<\frac{1-g(p_1)}{1-g(p_2)}$, define $c_1$ as in \eqref{equ:e_a12}. Then, $c_1\in(0, 1)$. For any such $p_1, p_2, c_3$, and $c_1$, it follows from the convexity of the level sets of $\rho$ that
\begin{align*}
& x_1g(p_1)+x_2(1-g(p_1))=\rho(p_1\delta_{x_1}+(1-p_1)\delta_{x_2})\\
={}& \rho(v(p_1\delta_{x_1}+(1-p_1)\delta_{x_2})+(1-v)(p_2\delta_{x_1}+(1-p_2)\delta_{x_3}))\\
={}&\rho((vp_1+(1-v)p_2)\delta_{x_1}+v(1-p_1)\delta_{x_2}+(1-v)(1-p_2)\delta_{x_3})\\
={}&x_1g(vp_1+(1-v)p_2)+x_2(g(v+(1-v)p_2)-g(vp_1+(1-v)p_2))\\
&+x_3(1-g(v+(1-v)p_2)),\ \forall  v\in(0, 1),
\end{align*}
which is equivalent to
\begin{align*}
& c_1[g(p_1)-g(vp_1+(1-v)p_2)]+1-g(p_1)-g(v+(1-v)p_2)+g(vp_1+(1-v)p_2)\\
={}& c_3[1-g(v+(1-v)p_2)],\ \forall  v\in(0, 1).
\end{align*}
Plugging \eqref{equ:e_a12} into the above equation, we obtain that for any $0<p_1<p_2<1$, any $1<c_3<\frac{1-g(p_1)}{1-g(p_2)}$, and any $v\in(0, 1)$, it holds that
\begin{align}\label{equ:e_a13}
0={}&  c_3\left[\frac{1-g(p_2)}{g(p_1)-g(p_2)}(g(p_1)-g(vp_1+(1-v)p_2))-1+g(v+(1-v)p_2)\right]\notag\\
& -\frac{1-g(p_1)}{g(p_1)-g(p_2)}[g(p_1)-g(vp_1+(1-v)p_2)]+1-g(p_1)\notag\\
&-g(v+(1-v)p_2)+g(vp_1+(1-v)p_2).
\end{align}
Therefore,
\begin{align*}
0=&-\frac{1-g(p_1)}{g(p_1)-g(p_2)}[g(p_1)-g(vp_1+(1-v)p_2)]+1-g(p_1)\\
&-g(v+(1-v)p_2)+g(vp_1+(1-v)p_2),\ \forall v\in(0, 1), \forall p_1<p_2,
\end{align*}
which is equivalent to
\begin{align}\label{equ:e_a14}
0={}& g(vp_1+(1-v)p_2)(1-g(p_2))+g(v+(1-v)p_2)(g(p_2)-g(p_1))\notag\\
&+g(p_1)g(p_2)-g(p_2),\ \forall v\in(0, 1), \forall p_1<p_2.
\end{align}
Letting $v\uparrow 1$ in \eqref{equ:e_a14}, we obtain
\begin{equation}\label{equ:e_64}
0=g(p_1+)(1-g(p_2))+g(1-)(g(p_2)-g(p_1))+g(p_1)g(p_2)-g(p_2), \forall p_1<p_2.
\end{equation}
Since $g$ is increasing on $(0, 1)$, there exists $p^*_1\in(0, 1)$, such that $g$ is continuous at $p^*_1$. Choose any $p^*_2>p_1^*$. Letting $p_1=p_1^*$ and $p_2=p_2^*$ in \eqref{equ:e_64} leads to
$(g(p^*_1)-g(p^*_2))(1-g(1-))=0$. Since $g$ is strictly increasing, it follows that
\begin{equation}\label{equ:e_a15}
g(1-)=1.
\end{equation}
Letting $q=\frac{1}{2}$ in \eqref{equ:e_a6} leads to
\begin{equation}\label{equ:e_a16}
\frac{g(\frac{v}{2})}{1-g(1-\frac{v}{2})}=\frac{g(\frac{1}{2})}{1-g(\frac{1}{2})}, \forall v\in(0, 1).
\end{equation}
It follows from \eqref{equ:e_a16} and \eqref{equ:e_a15} that
\begin{align}\label{equ:e_a61}
g(0+)&=\lim_{v\downarrow 0}g(\frac{v}{2})=\lim_{v\downarrow 0}\frac{g(\frac{1}{2})}{1-g(\frac{1}{2})}(1-g(1-\frac{v}{2}))=\frac{g(\frac{1}{2})}{1-g(\frac{1}{2})}(1-g(1-))=0.
\end{align}
We will then show that $g$ is continuous on $(0, 1)$. By \eqref{equ:e_a6}, we have
\begin{align}\label{equ:e_a30}
  &g(v-)=\lim_{q\uparrow 1}g(vq)=\lim_{q\uparrow 1}\frac{g(q)}{1-g(q)}(1-g(vq+1-v))\notag\\
  ={}& \lim_{q\uparrow 1}g(q)\lim_{q\uparrow 1}\frac{1-g(vq+1-v)}{1-g(q)}\notag\\
={}& g(1-)\lim_{q\uparrow 1}\frac{1-g(vq+1-v)}{g((1-q)v)}\frac{g((1-q)v)}{g(1-q)}\frac{g(1-q)}{1-g(q)}\notag\\
={}& \lim_{q\uparrow 1}
\frac{1-g(\frac{1}{2})}{g(\frac{1}{2})}\frac{g((1-q)v)}{g(1-q)}\frac{g(\frac{1}{2})}{1-g(\frac{1}{2})}\ (\text{by}\ \eqref{equ:e_a15}\ \text{and}\ \eqref{equ:e_a16})\notag\\
={}&
\lim_{q\uparrow 1}\frac{g((1-q)v)}{g(1-q)}=\lim_{q\downarrow 0}\frac{g(qv)}{g(q)},\ \forall v\in(0, 1).
\end{align}

Now consider $0=x_1<x_2<x_3<x_4$ and $p_1, p_2\in(0, 1)$ such that
\begin{equation*}
\rho(p_1\delta_{x_1}+(1-p_1)\delta_{x_3})=\rho(p_2\delta_{x_2}+(1-p_2)\delta_{x_4}),
\end{equation*}
which is equivalent to
\begin{equation}\label{equ:e_a17}
x_1g(p_1)+x_3(1-g(p_1))=x_2g(p_2)+x_4(1-g(p_2)).
\end{equation}
Since $\rho$ has convex level sets, it follows that for any $v\in(0, 1)$, it holds that
\begin{align}\label{equ:e_a19}
 & x_3(1-g(p_1))=x_1g(p_1)+x_3(1-g(p_1))=\rho(p_1\delta_{x_1}+(1-p_1)\delta_{x_3})\notag\\
 ={}&{} \rho(v(p_1\delta_{x_1}+(1-p_1)\delta_{x_3})+(1-v)(p_2\delta_{x_2}+(1-p_2)\delta_{x_4}))\notag\\
 ={}&{}
 \rho(vp_1\delta_{x_1}+(1-v)p_2\delta_{x_2}+v(1-p_1)\delta_{x_3}+(1-v)(1-p_2)\delta_{x_4})\notag\\ ={}&{}
 x_2(g(vp_1+(1-v)p_2)-g(vp_1))\notag\\
 &{}+x_3(g(v+(1-v)p_2)-g(vp_1+(1-v)p_2))+x_4(1-g(v+(1-v)p_2)).
\end{align}
Let $\frac{x_3}{x_2}=1+c_3$ and $\frac{x_4}{x_2}=1+c_3+c_4$. Then, $c_3>0$, $c_4>0$, and \eqref{equ:e_a17} becomes
\begin{equation}\label{equ:e_a18}
c_3=\frac{1-g(p_2)}{g(p_2)-g(p_1)}c_4+\frac{g(p_1)}{g(p_2)-g(p_1)}.
\end{equation}
Furthermore, \eqref{equ:e_a19} is equivalent to
\begin{align}\label{equ:e_a20}
0={}& g(vp_1+(1-v)p_2)-g(vp_1)+(1+c_3+c_4)(1-g(v+(1-v)p_2))\notag\\
&+(1+c_3)(g(v+(1-v)p_2)-g(vp_1+(1-v)p_2)-1+g(p_1)), \forall v\in(0, 1).
\end{align}
For any $0<p_1<p_2<1$ and $c_4>0$, let $c_3$ be defined in \eqref{equ:e_a18}. Then, $c_3>0$. Hence, \eqref{equ:e_a20} holds for any such $p_1, p_2, c_3$, and $c_4$. Plugging \eqref{equ:e_a18} into \eqref{equ:e_a20}, we obtain that for any $0<p_1<p_2<1$ and any $c_4>0$, it holds that
\begin{align}\label{equ:e_a21}
0={}& g(vp_1+(1-v)p_2)-g(vp_1)+\frac{g(p_2)}{g(p_2)-g(p_1)}[g(p_1)-g(vp_1+(1-v)p_2)]\notag\\
& + c_4\frac{1-g(p_2)}{g(p_2)-g(p_1)}[g(v+(1-v)p_2)-g(vp_1+(1-v)p_2)-1+g(p_1)]\notag\\
& +c_4 \frac{1-g(p_1)}{g(p_2)-g(p_1)}[1-g(v+(1-v)p_2)], \forall v\in(0, 1),
\end{align}
which implies that
\begin{align*}
0={}&g(vp_1+(1-v)p_2)-g(vp_1)\\
&+\frac{g(p_2)}{g(p_2)-g(p_1)}[g(p_1)-g(vp_1+(1-v)p_2)],\ \forall 0<p_1<p_2<1, \forall v\in(0, 1),
\end{align*}
which can be simplified to be
\begin{equation}\label{equ:e_a43}
-g(vp_1+(1-v)p_2)-(g(p_2)-g(p_1))\frac{g(vp_1)}{g(p_1)}+g(p_2)=0,\ \forall p_1<p_2, \forall v\in(0, 1).
\end{equation}
Letting $p_2\uparrow 1$ in \eqref{equ:e_a43} and applying \eqref{equ:e_a15}, we obtain
\begin{equation}\label{equ:e_65}
-g((vp_1+1-v)-)-(1-g(p_1))\frac{g(vp_1)}{g(p_1)}+1=0,\ \forall 0<p_1<1, \forall v\in(0, 1).
\end{equation}
Then, it follows from \eqref{equ:e_a6} and \eqref{equ:e_65} that
$$g((vp_1+1-v)-)=g(vp_1+1-v), \forall 0<p_1<1, \forall v\in(0, 1),$$
which implies that
\begin{equation}\label{equ:e_66}
g(v-)=g(v), \forall v\in(0, 1).
\end{equation}
It follows from \eqref{equ:e_63} and \eqref{equ:e_66} that $g$ is continuous on $(0, 1)$, i.e.,
\begin{equation}\label{equ:e_67}
g(v-)=g(v)=g(v+), \forall v\in(0, 1).
\end{equation}

Lastly, we will show that $g(u)=u$ for any $u\in (0, 1)$. Letting $p_1\downarrow 0$  in \eqref{equ:e_a43}, we obtain
\begin{equation}\label{equ:e_a62}
-g(((1-v)p_2)+)-(g(p_2)-g(0+))\lim_{p_1\downarrow 0}\frac{g(vp_1)}{g(p_1)}+g(p_2)=0,\ \forall 0<p_2<1, \forall v\in(0, 1).
\end{equation}
Applying \eqref{equ:e_a61}, \eqref{equ:e_a30}, and \eqref{equ:e_67} to \eqref{equ:e_a62}, we obtain
\begin{equation}\label{equ:e_a23}
g((1-v)p_2)=g(p_2)(1-g(v)),\ \forall 0<p_2<1, \forall v\in (0, 1).
\end{equation}
Letting $p_2\uparrow 1$ in \eqref{equ:e_a23} and using \eqref{equ:e_a15} and \eqref{equ:e_67}, we obtain
\begin{equation}\label{equ:e_a24}
g(1-v)=g(1-)(1-g(v))=1-g(v),\ \forall v\in(0, 1),
\end{equation}
which in combination with \eqref{equ:e_a23} implies
\begin{equation}\label{equ:e_a25}
g(vp_2)=g(v)g(p_2),\ \forall 0<p_2<1, \forall v\in(0, 1).
\end{equation}
In the following, we will show by induction that
\begin{equation}\label{equ:e_a32}
g(\frac{k}{2^n})=\frac{k}{2^n},\ k=1,2,\ldots, 2^n-1, \forall n\in \mathbb{N}.
\end{equation}
Letting $v=\frac{1}{2}$ in \eqref{equ:e_a24}, we obtain $g(\frac{1}{2})=\frac{1}{2}$. Hence, \eqref{equ:e_a32} holds for $n=1$. Suppose \eqref{equ:e_a32} holds for $n$. We will show that it also holds for $n+1$. In fact, for any $0\leq k\leq 2^{n-1}-1$, since $1\leq 2k+1\leq 2^n-1$, it follows from \eqref{equ:e_a25} that
\begin{equation}\label{equ:e_a33}
g(\frac{2k+1}{2^{n+1}})=g(\frac{1}{2})g(\frac{2k+1}{2^{n}})=\frac{2k+1}{2^{n+1}},\ 0\leq k\leq 2^{n-1}-1.
\end{equation}
For any $2^{n-1}\leq k\leq 2^n-1$, it holds that $1\leq 2^{n+1}-(2k+1)\leq 2^n-1$. 
Hence, it follows from \eqref{equ:e_a24} that
\begin{align}\label{equ:e_a34}
 & g(\frac{2k+1}{2^{n+1}})=1-g(\frac{2^{n+1}-(2k+1)}{2^{n+1}})= 1- \frac{2^{n+1}-(2k+1)}{2^{n+1}}\ (\text{by}\ \eqref{equ:e_a33})\notag\\
 = & \frac{2k+1}{2^{n+1}},\ 2^{n-1}\leq k\leq 2^n-1.
\end{align}
In addition, for any $1\leq k\leq 2^n-1$, $g(\frac{2k}{2^{n+1}})=g(\frac{k}{2^{n}})=\frac{k}{2^{n}}$, which in combination with \eqref{equ:e_a33} and \eqref{equ:e_a34} implies that \eqref{equ:e_a32} holds for $n+1$, and hence holds for any $n$. Since $\{k/2^n,k=1,\ldots, 2^n-1, n\in\mathbb{N}\}$ is dense on $(0, 1)$ and $g$ is continuous on $(0, 1)$, it follows from \eqref{equ:e_a32} that $g(u)=u$ for all $u\in(0, 1)$, which completes the proof.
\end{proof}

Finally, the proof of Theorem \ref{thm:elicitable_measures} is as follows.

\begin{proof}[Proof of Theorem \ref{thm:elicitable_measures}.] By Lemma \ref{lemma:elicitable_nece} and Theorem \ref{thm:convex_level_set_rm}, only those risk measures listed in cases (i)-(iv) of Theorem \ref{thm:convex_level_set_rm} satisfy the necessary condition for being an elicitable risk measure. Therefore, we only need to study the elicitability of those risk measures.

First, we will show that for $c\in(0, 1]$, $\rho=c\VaR_0+(1-c)\VaR_1$ is not elicitable. Suppose for the sake of contradiction that $\rho$ is elicitable, then there exists a function $S$ such that \eqref{equ:e_62} holds. For any $u$, letting $F=\delta_{u}$ in \eqref{equ:e_62} and noting $\rho(\delta_u)=u$ yields
\begin{equation}\label{equ:e_95}
S(u, u)\leq S(x, u), \forall x, \forall u,\ \text{and the equality holds only if}\ u\leq x.
\end{equation}
For any $u<v$ and $p\in(0, 1)$, letting $F=p\delta_u+(1-p)\delta_{v}$ in \eqref{equ:e_62} yields $pS(cu+(1-c)v, u)+(1-p)S(cu+(1-c)v, v)\leq pS(x, u)+(1-p)S(x, v)$, $\forall x$. Letting $p\to 0$ leads to
\begin{equation}\label{equ:e_96}
S(cu+(1-c)v, v)\leq S(x, v), \forall u < v, \forall x.
\end{equation}
Letting $x=v$ in \eqref{equ:e_96}, we obtain
\begin{equation}\label{equ:e_97}
S(cu+(1-c)v, v)\leq S(v, v), \forall u < v.
\end{equation}
By \eqref{equ:e_95}, $S(v, v)\leq S(cu+(1-c)v, v)$, $\forall u<v$, which in combination with \eqref{equ:e_97} implies
$S(v, v)=S(cu+(1-c)v, v)$, $\forall u<v$; however, by \eqref{equ:e_95}, $S(v, v)=S(cu+(1-c)v, v)$ implies $v\leq cu+(1-c)v$, which contradicts to $u<v$. Hence, $\rho$ is not elicitable.


Second, we will show that for $c=0$, $\rho=c\VaR_0+(1-c)\VaR_1=\VaR_1$ is elicitable with respect to $\PSet$. Let $a>0$ be a constant and define the forecasting objective function
$$S(x, y)=\begin{cases}
  0, & \text{if}\ x\geq y,\\
  a, & \text{else}.
\end{cases}$$
Then for any $F\in\PSet$ and any $x\geq \rho(F)$,
$$\int_{\mathbb{R}} S(x, y)dF(y)=\int_{y\leq \rho(F)} S(x, y)dF(y)=0.$$
On the other hand, for any $F\in\PSet$ and any $x<\rho(F)$,
$$\int_{\mathbb{R}} S(x, y)dF(y)=\int_{x<y\leq \rho(F)} S(x, y)dF(y)=a\int_{x<y\leq \rho(F)}dF(y)=a(1-F(x))>0.$$
Therefore, for any $F\in\PSet$, $\rho(F)=\min\{x\mid x\in \argmin_x \int S(x, y)dF(y)\}$.

Third, we will show that for any $\alpha\in(0, 1)$, $\VaR_{\alpha}$ is elicitable with respect to $\PSet$. Let $g(\cdot)$ be a strictly increasing function defined on $\mathbb{R}$. Define
\begin{equation}\label{equ:e_51}
S(x, y)=(1_{\{x\geq y\}}-\alpha)(g(x)-g(y)).
\end{equation}
and define $\PSet=\{F_X\mid  E|g(X)|<\infty\}$.\footnote{For example, if $g(x):=x$, then $\PSet=\{F_X\mid X\in\mathcal{L}^1(\Omega, \mathcal{F}, P)\}$; if $g(x):=x^{\frac{1}{2n+1}}$ ($n\geq 1$), then $\PSet$ includes heavy tailed distributions with infinite mean such as Cauchy distribution.}
It follows from Theorem 9 in \cite{Gneiting-2011} that 
$$[q^-_{\alpha}(F), q^+_{\alpha}(F)]=\argmin_{x}\int S(x, y)dF(y),$$
where $q^-_{\alpha}(F):=\inf\{y\mid F(y)\geq \alpha\}$ and $q^+_{\alpha}(F):=\inf\{y\mid F(y)>\alpha\}$. Therefore, $\VaR_{\alpha}(F)=q^-_{\alpha}(F)$ satisfies \eqref{equ:e_62} with $S$ defined in \eqref{equ:e_51}.

Fourth, we will show that $\rho$ defined in \eqref{equ:e_74} is not elicitable with respect to $\PSet$. Suppose for the purpose of contradiction that $\rho$ is elicitable. Fix any $a>0$ and denote $I:=(-a, a)$. Let $\PSet_I$ be the set of probability measures that have strictly positive probability density on the interval $I$ and whose support is $I$. Then since $\PSet_I\subset \PSet$ and $\rho$ is elicitable with respect to $\PSet$, $\rho$ is also elicitable with respect to $\PSet_I$. Therefore, there exists a forecasting objective function $S(x, y)$ such that
$$\rho(F)=\min\{x\mid x\in \argmin_{x}\int S(x, y)dF(y)\},\forall F\in \PSet_I.$$
For any $F\in\PSet_I$, the equation $F(x)=\alpha$ has a unique solution $q_{\alpha}(F)$ and $q_{\alpha}^-(F)=q_{\alpha}(F)=q_{\alpha}^+(F)$. Hence, $\rho(F)=q_{\alpha}(F), \forall F\in\PSet_I$. Therefore, we have
$$q_{\alpha}(F)\in \argmin_x \int S(x, y)dF(y), \forall F\in\PSet_I.$$
Then, it follows from the proposition in \citet[][p. 372]{Thomson-1979} that\footnote{\citet{Thomson-1979} obtains the proposition for the case when the interval $I=(-\infty, \infty)$; in our case, $I=(-a, a)$. It can be verified that the proof of the proposition in \citet{Thomson-1979} can be easily adapted to the case of $I=(-a, a)$. The details are available from the authors upon request. To make the paper self-contained, the Proposition of \citet[][p. 372]{Thomson-1979} is quoted here: A necessary and sufficient condition for a scheme $H$ to elicit $x^*$, solution of ``$F(x)=r$", as best answer is that $H$ satisfies:
\begin{equation}
H(x, y)=\begin{cases}
  A_1(x)+B_1(y)\ a.e.\ \text{if}\ y\leq x\\
  A_2(x)+B_2(y)\ a.e.\ \text{if}\ y> x
\end{cases}
\end{equation}
with
\begin{equation}\label{equ:e_75}
(A_1(x_1)-A_1(x_2))r + (A_2(x_1)-A_2(x_2))(1-r)=0,\ \forall x_1, x_2,
\end{equation}
and
\begin{align}
  &B_2(\cdot)-B_1(\cdot)\ \text{is non-increasing a.e.}\\
  &B_2(\cdot)-B_1(\cdot)+A_2(\cdot)-A_1(\cdot)=0\ \text{a.e.}
\end{align}} there exist measurable functions $A_1$, $A_2$, $B_1$, and $B_2$ such that
\begin{equation}\label{equ:e_69}
S(x, y)=\begin{cases}
  A_1(x)+B_1(y)\ a.e.\ \text{if}\ y\leq x,\\
  A_2(x)+B_2(y)\ a.e.\ \text{if}\ y> x,
\end{cases}
\end{equation}
and
\begin{equation}\label{equ:e_70}
(A_1(x_1)-A_1(x_2))\alpha + (A_2(x_1)-A_2(x_2))(1-\alpha)=0,\ \forall x_1, x_2\in I.
\end{equation}
Choose a distribution $F_0\in\PSet$ such that
$q^-_{\alpha}(F_0)<q^+_{\alpha}(F_0)$, $F_0$ has a density $f_0$ that satisfies $f_0(x)=0$ for $x\in(q^-_{\alpha}(F_0), q^+_{\alpha}(F_0))$, and $F_0(q^-_{\alpha}(F_0))=F_0(q^+_{\alpha}(F_0))=\alpha$. Then, it follows from \eqref{equ:e_69} that for any $x\in[q^-_{\alpha}(F_0), q^+_{\alpha}(F_0)]$,
\begin{align}\label{equ:e_71}
  &\int S(x, y)dF_0(y)=\int_{y\leq x} S(x, y)f_0(y)dy+\int_{y>x} S(x, y)f_0(y)dy\notag\\
  ={}&\int_{y\leq x} (A_1(x)+B_1(y))f_0(y)dy+\int_{y>x} (A_2(x)+B_2(y))f_0(y)dy\notag\\
  ={}&A_1(x)\int_{y\leq x}f_0(y)dy+\int_{y\leq x}B_1(y)f_0(y)dy+A_2(x)\int_{y>x} f_0(y)dy+\int_{y>x}B_2(y) f_0(y)dy\notag\\
  ={}&A_1(x)\alpha+\int_{y\leq x}B_1(y)f_0(y)dy+A_2(x)(1-\alpha)+\int_{y>x}B_2(y) f_0(y)dy. 
\end{align}
Since $f_0(x)=0$ for $x\in(q^-_{\alpha}(F_0), q^+_{\alpha}(F_0))$, it follows that
\begin{align}
  \int_{y\leq x_1}B_1(y)f_0(y)dy&=\int_{y\leq x_2}B_1(y)f_0(y)dy,\ \forall x_1, x_2\in [q^-_{\alpha}(F_0), q^+_{\alpha}(F_0)],\label{equ:e_72}\\
  \int_{y> x_1}B_2(y)f_0(y)dy&=\int_{y> x_2}B_2(y)f_0(y)dy,\ \forall x_1, x_2\in [q^-_{\alpha}(F_0), q^+_{\alpha}(F_0)].\label{equ:e_73}
\end{align}
Since $c\in[0, 1)$, $\rho(F_0)=cq^-_{\alpha}(F_0)+(1-c)q^+_{\alpha}(F_0)\in(q^-_{\alpha}(F_0), q^+_{\alpha}(F_0)]$. It then follows from \eqref{equ:e_70}, \eqref{equ:e_71}, \eqref{equ:e_72}, and \eqref{equ:e_73} that
for any $x\in[q^-_{\alpha}(F_0), q^+_{\alpha}(F_0)]$,
\begin{align*}
&\int S(x, y)dF_0(y)-\int S(\rho(F_0), y)dF_0(y)\\
={}&(A_1(x)-A_1(\rho(F_0)))\alpha+(A_2(x)-A_2(\rho(F_0)))(1-\alpha)=0,
\end{align*}
which in combination with
$\rho(F_0)\in\argmin_x \int S(x, y)dF_0(y)$
implies that
$$[q^-_{\alpha}(F_0), q^+_{\alpha}(F_0)]\subset \argmin_{x}\int S(x, y)dF_0(y).$$
Therefore,
$$\rho(F_0)=\min\{x\mid x\in \argmin_x \int S(x, y)dF_0(y)\}\leq q^-_{\alpha}(F_0),$$
which contradicts to
$\rho(F_0)>q^-_{\alpha}(F_0)$. Hence,
$\rho$ defined in \eqref{equ:e_74} is not elicitable.

Fifth, it follows from Theorem 7 in \citet{Gneiting-2011} that $\rho(F):=\int xdF(x)$ is elicitable with respect to $\PSet$. The proof is thus completed.
\end{proof}

\bibliographystyle{dcu}
\bibliography{RiskMeasures}

\end{document}